\def\BibTeX{{\rm B\kern-.05em{\sc i\kern-.025em b}\kern-.08emT\kern-.1667em\lower.7ex\hbox{E}\kern-.125emX}}
\newcommand{\dis}{{\sc Dispersion}}
\newcommand{\ce}{{\em Collapse\&Extend}}
\newcommand{{\rh}}{{\widehat r}}
\newcommand{{\Rh}}{{\widehat R}}
\newcommand{\cR}{{\mathcal R}}
\newcommand{\cT}{{\mathcal T}}
\newcommand{\shortOnly}[1]{\ifthenelse{\boolean{short}}{#1}{}}
\newcommand{\onlyShort}[1]{\ifthenelse{\boolean{short}}{#1}{}}
\newcommand{\longOnly}[1]{\ifthenelse{\boolean{short}}{}{#1}}
\newcommand{\onlyLong}[1]{\ifthenelse{\boolean{short}}{}{#1}}
\newcommand{\shortLong}[2]{\ifthenelse{\boolean{short}}{#2}{#1}}
\newcommand{\longShort}[2]{\ifthenelse{\boolean{short}}{#2}{#1}} 
\begin{document}

%

\title{Near-Optimal Dispersion on Arbitrary Anonymous Graphs}

%
\author{Ajay D. Kshemkalyani}
\affiliation{%
  \institution{University of Illinois at Chicago}
  \state{Illinois 60607}
  \country{USA}}
  \email{ajay@uic.edu}

\author{Gokarna Sharma}
\affiliation{%
  \institution{Kent State University}
  \city{Ohio 44240}
  \country{USA}}
\email{sharma@cs.kent.edu}

%
\renewcommand{\shortauthors}{A. D. Kshemkalyani and G. Sharma}

\begin{abstract}
Given an undirected, anonymous, port-labeled graph of $n$ memory-less nodes, $m$ edges, and degree $\Delta$, we consider 
the problem of dispersing $k\leq n$ robots (or tokens) positioned initially arbitrarily on one or more nodes of the graph
to exactly $k$ different nodes of the graph, one on each node. The objective is to simultaneously minimize time to achieve dispersion and memory requirement at each  robot. 
If all $k$ robots are positioned initially on a single node, depth first search (DFS) traversal  
solves this problem in $O(\min\{m,k\Delta\})$ time with $\Theta(\log(k+\Delta))$ bits at each robot. 
However, if robots are positioned initially on multiple nodes, the best previously known algorithm solves this problem in $O(\min\{m,k\Delta\}\cdot \log \ell)$ time storing $\Theta(\log(k+\Delta))$ bits at each robot, where $\ell\leq k/2$ is the number of multiplicity nodes in the initial configuration. 
In this paper, we present a novel multi-source DFS traversal algorithm solving this problem 
in $O(\min\{m,k\Delta\})$ time with $\Theta(\log(k+\Delta))$ bits at each robot, improving the time bound of the best previously known algorithm by $O(\log \ell)$ and matching asymptotically the single-source DFS traversal bounds. 
This is the first algorithm for dispersion that is optimal in both time and memory in arbitrary anonymous graphs of constant degree, $\Delta=O(1)$. Furthermore, the result holds in both synchronous and asynchronous settings.
\end{abstract}

\begin{CCSXML}
<ccs2012>
 <concept>
  <concept_id>10010520.10010553.10010562</concept_id>
  <concept_desc>Computer systems organization~Embedded systems</concept_desc>
  <concept_significance>500</concept_significance>
 </concept>
 <concept>
  <concept_id>10010520.10010575.10010755</concept_id>
  <concept_desc>Computer systems organization~Redundancy</concept_desc>
  <concept_significance>300</concept_significance>
 </concept>
 <concept>
  <concept_id>10010520.10010553.10010554</concept_id>
  <concept_desc>Computer systems organization~Robotics</concept_desc>
  <concept_significance>100</concept_significance>
 </concept>
 <concept>
  <concept_id>10003033.10003083.10003095</concept_id>
  <concept_desc>Networks~Network reliability</concept_desc>
  <concept_significance>100</concept_significance>
 </concept>
</ccs2012>
\end{CCSXML}

\ccsdesc[500]{Mathematics of computing~Graph algorithms}
\ccsdesc[500]{Computing methodologies~Distributed algorithms}
\ccsdesc[500]{Computer systems organization~Robotics}
%
\keywords{Multi-agent systems, Mobile robots, Local communication, Dispersion, Exploration, Time and memory complexity}

\maketitle
\sloppy
\section{Introduction}
Given an undirected, anonymous, port-labeled graph of $n$ memory-less nodes, $m$ edges, and (maximum) degree $\Delta$, we consider 
the problem of dispersing $k\leq n$ robots (or tokens) positioned initially arbitrarily on one or more nodes of the graph to exactly $k$ different nodes of the graph, one on each node (which we call the {\dis} problem).  
This problem has many practical applications, for example, in relocating self-driven electric cars (robots) to recharge stations (nodes), assuming that the cars have smart devices to communicate with each other to find a free/empty charging station \cite{Augustine:2018,Kshemkalyani}.
This problem is also important because it has the flavor of many other well-studied robot coordination problems,
such as  exploration, scattering, load balancing, covering, and self-deployment~\cite{Augustine:2018,Kshemkalyani,KshemkalyaniICDCS20}. 

One of the key aspects of mobile-robot research is to understand how to use the resource-limited robots to accomplish some large task in a distributed manner \cite{Flocchini2012,flocchini2019}. 
%
In this paper, we study trade-off between time and memory complexities to  solve {\dis} on arbitrary anonymous graphs. 
Time complexity is measured as time duration to achieve dispersion and memory complexity is measured as number of bits stored at each robot. 
The literature typically traded memory (or time) to obtain better time (or memory) bounds (for example, compare memory and time bounds of the two algorithms from \cite{Kshemkalyani} given in Table \ref{table:comparision}). 

Recent studies \cite{KshemkalyaniMS19,ShintakuSKM20} focused on minimizing time and memory complexities simultaneously. More precisely, they tried to answer the following question: {\em Can the time bound of $O(\min\{m,k\Delta\})$ be obtained keeping memory optimal $\Theta(\log(k+\Delta))$ bits at each robot?} This question can be easily answered in the single-source case of all $k\leq n$ robots initially co-located on a node. The challenge is how to answer it in the multi-source case of the robots initially on two or more nodes of the graph. For the multi-source case, the algorithms in \cite{KshemkalyaniMS19,ShintakuSKM20} were successful in keeping memory bound optimal as in \cite{Kshemkalyani} and reduce time bound to $O(\min\{m,k\Delta\}\cdot \log \ell)$, an improvement of $\ell/\log \ell$ factor compared to the $O(\min\{m,k\Delta\}\cdot \ell)$ time bound of \cite{Kshemkalyani}, where $\ell\leq k/2$ is the number of multiplicity nodes in the initial configuration. 

In this paper, we present a new algorithm for {\dis} that settles the question completely, i.e., it obtains the time bound of  $O(\min\{m,k\Delta\})$ keeping memory optimal $\Theta(\log(k+\Delta))$ bits at each robot, first such result for the multi-source case. The time bound is an improvement of $O(\log \ell)$ factor compared to the best previously known algorithms \cite{KshemkalyaniMS19,ShintakuSKM20}.
Furthermore, the time and memory bounds match the  respective bounds for the single-source case. Thus, the proposed algorithm is the first for {\dis} that is simultaneously optimal in both time and memory for arbitrary anonymous graphs of constant degree $\Delta=O(1)$. 

\begin{table*}[!t]
{\footnotesize
\centering
\begin{tabular}{ccccc}
\toprule
{Algorithm} & {Memory per robot (in bits)}   & {Time (in rounds/epochs)} & {Single-Source/multi-source} & {Setting}   \\
\toprule
Lower bound & $\Omega(\log(k+\Delta))$    & $\Omega(k)$ & any  & Asynchronous\\
\hline
DFS & $\Theta(\log(k+\Delta))$    & $O(\min\{m,k\Delta\})$  & Single-Source & Asynchronous\\
\hline
Kshemkalyani and Ali \cite{Kshemkalyani} & $O(k\log \Delta)$     & $O(\min\{m,k\Delta\})$  & Multi-Source & Asynchronous\\
\hline
Kshemkalyani and Ali \cite{Kshemkalyani} & $\Theta(\log(k+\Delta))$     & $O(\min\{m,k\Delta\}\cdot \ell)$ & Multi-Source & Asynchronous\\
\hline
Kshemkalyani {\it et al.} \cite{KshemkalyaniMS19}$^\dag$ & $O(\log n)$      & $O(\min\{m,k\Delta\}\cdot \log \ell)$$^\dag$  & Multi-Source & Synchronous\\
\hline
Shintaku {\it et al.} \cite{ShintakuSKM20} & $\Theta(\log(k+\Delta))$      & $O(\min\{m,k\Delta\}\cdot \log \ell)$ & Multi-Source & Synchronous\\
\hline
{\bf Theorem~\ref{theorem:0}} & $\Theta(\log(k+\Delta))$ & $O(\min\{m,k\Delta\})$ & Multi-source & Synchronous\\
\hline
{\bf Theorem~\ref{theorem:1}} & $\Theta(\log(k+\Delta))$      & $O(\min\{m,k\Delta\})$  & Multi-Source & Asynchronous\\
\bottomrule
\end{tabular}
\caption{Algorithms solving {\dis} for $k\leq n$ robots on undirected, anonymous, port-labeled graphs of $n$ memory-less nodes, $m$ edges, and (maximum) degree $\Delta$. $^\dag$\cite{KshemkalyaniMS19} assumes $m,k,$ and $\Delta$ are known to the algorithm a priori. $\ell\leq k/2$ is the number of multiplicity nodes in the initial configuration;  {\dis} is already solved if there is no multiplicity node.   
}  
\label{table:comparision}
}
\end{table*}


\vspace{1mm}
\noindent{\bf Overview of the Model and Results.} We consider $k\leq n$ robots operating on an undirected, anonymous (no node IDs), port-labeled graph $G$ of $n$ memory-less nodes, $m$ edges, and degree $\Delta$.  
The ports (leading to incident
edges) at each node have unique labels from $[0,\delta -1]$, where $\delta$ is the degree of that node. ($\Delta$ is the maximum over $\delta$'s of all $n$ nodes.) 
The robots have unique IDs in the range $[1,k]$. 
In contrast to graph nodes which are memory-less, the robots have memory to store information (otherwise the problem becomes unsolvable). 
Finally, at any time, the robots co-located at the same node of $G$ can communicate and exchange information, if needed, but they cannot communicate and exchange when located on different nodes.  
We call an initial configuration {\em single-source} if all $k$ robots are initially positioned on a single node of $G$,  otherwise we call it {\em multi-source}. Even in the multi-source initial configurations, the robots can only be on $1<k'<k$ nodes, since for the case of $k'=k$, the initial configuration is already a configuration that solves {\dis}.

In this paper, we establish the following theorem in the {\em synchronous} setting where all robots are activated in a round, they perform their operations simultaneously in synchronized rounds, and hence 
the time (of the algorithm) is measured in rounds (or steps). 

\begin{theorem}
\label{theorem:0}
Given any initial configuration of $k\leq n$ mobile robots on the nodes of an undirected, anonymous, port-labeled 
graph $G$ of $n$ memory-less nodes, 
$m$ edges, and degree $\Delta$, dispersion can be solved deterministically in $O(\min\{m,k\Delta\})$ rounds in the synchronous setting storing $O(\log(k+\Delta))$ bits  at each robot. 
\end{theorem}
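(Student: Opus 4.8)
The plan is to build a multi-source DFS that simulates, as closely as possible, the behavior of a single-source DFS while paying only a constant-factor overhead for the coordination among the $\ell$ initial groups. I would start by running, in parallel, one independent DFS traversal from each of the (at most $\ell$) multiplicity nodes, each carrying its own set of robots and each labeled by the smallest robot ID in its initial group (so the label fits in $\Theta(\log k)$ bits and induces a total order on the DFS trees). Each such DFS proceeds exactly as the single-source DFS of \cite{Kshemkalyani}: it settles one robot per newly discovered empty node, it uses port numbers and the $\Theta(\log(k+\Delta))$ bits at the settled robot to remember the parent port and the next port to explore, and it backtracks when the current node has no unexplored neighbor. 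The key new ingredient is the rule for what happens when two DFS traversals meet — either at a node already settled by another tree, or because a moving group walks onto a node occupied by a different tree's settled robot.

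The core idea for collisions is a \emph{priority / absorption} scheme: when DFS tree $A$ (carrying a group of not-yet-settled robots) encounters a node owned by tree $B$, the lower-priority tree is "absorbed" by the higher-priority one. Concretely, if $B$ has higher priority, the roaming robots of $A$ join $B$'s frontier and thereafter behave as part of $B$'s DFS; the portion of $A$ already built becomes a subtree that $B$ will later need to treat as already-explored territory. If $A$ has higher priority, then $B$'s tree is marked subsumed and $A$ continues, treating $B$'s already-settled nodes as "explored" so it never re-settles them, but it must still traverse through them to reach the rest of the graph. The subtle point, and the one I expect to be the main obstacle, is bounding the total number of edge traversals: a naive absorption can cause a low-priority DFS to redo a long walk through territory it cannot use, and if this happens $\ell$ times in nested fashion we are back to an $O(\cdot)\cdot\ell$ or $O(\cdot)\cdot\log\ell$ blow-up. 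The fix I would pursue is a charging argument in which every "wasted" traversal step is charged either to an edge of the absorbing tree (each edge of each final DFS tree is traversed $O(1)$ times, as in single-source DFS) or to a robot that gets settled as a direct consequence, so that the total work telescopes to $O(\min\{m,k\Delta\})$ regardless of $\ell$; making the collision protocol local (robots only see their current node and the settled robot there, with $O(\log(k+\Delta))$ bits) while still realizing this global charging is the technical heart of the argument.

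The remaining steps are then: (i) prove correctness — show that after absorptions stabilize there is a single DFS tree (or a forest whose trees partition the visited nodes), that every node gets exactly one settled robot, and that no node is settled twice, using an invariant that the priorities are totally ordered and absorption is monotone (a tree's priority never decreases), so the process cannot cycle; (ii) prove the memory bound — each settled robot stores only its parent port, a pointer to the next port to explore, a phase/state flag, and the tree identifier (smallest ID in its originating group), all $O(\log(k+\Delta))$ bits, and each roaming robot stores only a constant amount of state plus its own ID; (iii) prove the time bound via the charging scheme above, handling separately the $m$ term (sum over final trees of their edge counts, each edge $O(1)$ times) and the $k\Delta$ term (each of the $k$ settled robots contributes at most its degree worth of exploratory moves). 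Finally I would remark that the protocol uses only information available at a node when robots are co-located, so it ports to the asynchronous setting with the round count replaced by an epoch count, which is Theorem~\ref{theorem:1}; for Theorem~\ref{theorem:0} it suffices that in each synchronous round every active group makes one move or one settle, so the number of rounds equals the number of moves up to a constant.
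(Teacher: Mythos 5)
There is a genuine gap, and it sits exactly where you flagged it: the ``technical heart'' of the charging argument is missing, and the mechanism you propose cannot support it. Your collision rule is an ID/priority-based absorption in which the subsumed tree's settled robots stay in place and are thereafter treated as already-explored territory that the winning tree must walk through without settling. This is essentially the mechanism of the prior algorithms (Kshemkalyani--Ali and the pairwise-subsumption papers), and it is precisely what produces the extra $O(\ell)$ or $O(\log\ell)$ factor: each time the winning tree absorbs another tree, it may have to re-traverse all the territory it has previously absorbed (it cannot settle there, only pass through), so an absorbed region can be re-walked once per subsequent meeting rather than $O(1)$ times in total. Your hoped-for charge (``each edge of each final DFS tree is traversed $O(1)$ times'') therefore fails; no local bookkeeping with $O(\log(k+\Delta))$ bits is known to rescue it, and the paper explicitly argues that ID-based subsumption cannot limit this repeated traversal. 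The paper's proof rests on two ideas absent from your proposal: (i) subsumption is decided by \emph{size}, not ID --- the larger DFS (ties broken by ID) subsumes the smaller, and the size comparison is done by a bounded exploration of the met tree that aborts once a $rank$ exceeding one's own size is seen, so the comparison costs $O(\min\{d_i,d_j\}\Delta)$; and (ii) the subsumed tree is \emph{collapsed}: all of its robots, settled ones included, are un-settled, collected to its head, and handed to the subsuming tree's head, so the subsumed territory becomes free again and is re-settled only through the ordinary growth of the surviving DFS. Because the work of each meeting is proportional to the size of the tree that disappears, and one tree always remains subsuming, the total cost telescopes to $O(\min\{m,k\Delta\})$ (Lemmas~\ref{S1series} and~\ref{S2series}).

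A second, related omission is the treatment of non-pairwise meetings. With purely local communication, a tree that is being subsumed may simultaneously have its own unsettled robots off exploring a third tree, and transitive chains or cycles of meetings can arise; your invariant ``priorities are totally ordered and absorption is monotone'' does not by itself prevent deadlock or double-counting here. The paper needs substantial machinery for this --- the meeting graph, the partition into components with a master node, the $X$/$Y$ (trunk/branch) classification, junction locking, and the procedures $Collapse\_Into\_Parent$ and $Collapse\_Into\_Child$ --- both for correctness and to keep the waiting delays (the series $S2$) within $O(k\Delta)$. Without an analogue of this synchronization, and without the size-based collapse-and-collect mechanism, your outline does not yield the claimed $O(\min\{m,k\Delta\})$ bound.
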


Theorem \ref{theorem:0}  improves the time bound $O(\min\{m,k\Delta\}\cdot \log \ell)$ of the best previously known algorithms \cite{KshemkalyaniMS19,ShintakuSKM20} by a factor of $O(\log \ell)$ keeping the memory optimal, where $\ell$ is the number of nodes in the initial configuration with at least two robots co-located on them.  
Interestingly, both time and memory bounds of Theorem \ref{theorem:0} 
match asymptotically the $O(\min\{m,k\Delta\})$ time and $O(\log(k+\Delta))$ memory bounds for the single-source case, which is inherent for any DFS traversal based algorithm for {\dis}. Therefore, our technique (of synchronizing multiple source cases to obtain time and memory bounds similar to the single source case) might find applications in many other related problems in distributed robotics where search and traversal starts from multiple nodes initially.
Finally, for constant-degree arbitrary anonymous graphs, i.e.,  $\Delta=O(1)$, our algorithm is asymptotically optimal w.r.t. both time and memory, first such result for {\dis} (Table \ref{theorem:0}).  

Furthermore, we extend Theorem \ref{theorem:0} to the {\em asynchronous} setting where robots become active and perform their operations in arbitrary duration, keeping the same time and memory bounds. Here we measure time in epochs (instead of rounds) -- an epoch represents the time interval in which each robot becomes active at least once.

\begin{theorem}
\label{theorem:1}
Given the setting as in Theorem \ref{theorem:0}, dispersion can be solved deterministically in $O(\min\{m,k\Delta\})$ epochs in the asynchronous setting storing $O(\log(k+\Delta))$ bits  at each robot. 
\end{theorem}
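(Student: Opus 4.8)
The plan is to obtain Theorem~\ref{theorem:1} as a corollary of Theorem~\ref{theorem:0} by means of a synchronizer-style simulation, exactly along the lines of the general technique that let \cite{ShintakuSKM20} lift their synchronous algorithm to the asynchronous setting. The first step is to recall exactly what primitive operations the synchronous algorithm of Theorem~\ref{theorem:0} performs in a single round at a robot: it reads the information carried by robots co-located at its current node, does some local $O(\log(k+\Delta))$-bit computation, possibly writes updated state, and either stays put or moves across one port. The key observation is that none of this logic inspects a global clock; it only needs the invariant that all robots that are supposed to act ``together'' at a node see a mutually consistent snapshot before any of them leaves. So the asynchronous algorithm runs the identical state machine, augmented with a per-robot round counter $\rho$ stored in the same $O(\log(k+\Delta))$ bits (the counter never exceeds the round bound $O(\min\{m,k\Delta\})$, so it costs only $O(\log(k+\Delta))$ bits), and a handshake discipline that forces a robot not to advance from round $\rho$ to round $\rho+1$ until it has synchronized with every robot it would have interacted with in round $\rho$ of the synchronous execution.

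Concretely, I would carry out the following steps. \emph{(1)} Describe the local handshake: when a robot arrives at a node it waits until all robots currently ``responsible'' for the actions at that node in the relevant round are present and have matching $\rho$; since in any epoch every robot is activated at least once, within $O(1)$ epochs all such robots meet, exchange state, and jointly compute the round's transition. \emph{(2)} Argue by induction on $\rho$ that the global configuration after every robot has completed its round-$\rho$ actions in the asynchronous run is identical to the configuration after round $\rho$ of the synchronous run of Theorem~\ref{theorem:0}; the base case is the common initial configuration, and the inductive step is immediate because the transition at each node depends only on the (now provably identical) local multiset of robot states. \emph{(3)} Conclude correctness: since the synchronous algorithm terminates with dispersion achieved in $R = O(\min\{m,k\Delta\})$ rounds, the asynchronous simulation terminates with the same final (dispersed) configuration. \emph{(4)} Bound the epoch complexity: each simulated round is completed within $O(1)$ epochs (a robot needs a bounded number of activations to detect its partners, exchange information, and move), so the total is $O(R) = O(\min\{m,k\Delta\})$ epochs; memory is unchanged at $O(\log(k+\Delta))$ bits because we only added the $\rho$ counter and a constant number of handshake flags.

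The main obstacle I anticipate is making the handshake watertight in the presence of robots that are at different simulated rounds at the same time and robots that pass through a node without being ``responsible'' for it — in an anonymous, memory-less-node setting a robot cannot leave a marker on the node, so all coordination state lives on the robots themselves, and one must be careful that a robot which moves on in round $\rho$ does not get ``left behind'' information that a later-arriving round-$\rho$ robot still needs. The fix is the standard one: a robot that would move in round $\rho$ actually defers its physical move until it has confirmed (via a bounded wait) that every co-round partner at the node has seen the round-$\rho$ snapshot, so that the group acts atomically; because the synchronous algorithm already guarantees that the set of robots meeting at a node in a given round is determined by that round's configuration, this set is well-defined and finite, and the deferral costs only $O(1)$ epochs. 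A secondary, more routine point is verifying that the added round counter and flags genuinely fit in $O(\log(k+\Delta))$ bits, which follows since $\rho \le R = O(\min\{m,k\Delta\}) \le O(k\Delta)$, hence $\log \rho = O(\log k + \log\Delta) = O(\log(k+\Delta))$. With these pieces the theorem follows.
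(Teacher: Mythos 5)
There is a genuine gap, and it sits at the heart of your step \emph{(2)}. Your argument needs the asynchronous run to be an exact round-by-round simulation of the synchronous run: after every robot finishes its round-$\rho$ actions, the global configuration must equal the synchronous configuration after round $\rho$. To make that induction go through you must synchronize not only the robots that \emph{move together} as a group, but also robots from \emph{different} DFS traversals that, in the synchronous run, would have arrived at the same node in the same round (this is exactly how meetings, lockings, and the ``highest-ID robot settles when several DFSs hit a free node simultaneously'' rule are resolved). In this model --- anonymous, memory-less nodes and purely local communication --- a robot has no way to know that such a partner exists, let alone wait for it: absence is undetectable, there is no marker it can leave on the node, and a ``bounded wait'' is meaningless because a robot cannot observe epochs or real time, only its own activations. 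Concretely, if two DFS heads reach the same free node in the same synchronous round, but in the asynchronous schedule one group is activated much faster, the fast group sees an empty node, settles a robot, and the execution diverges from the synchronous one; your handshake cannot prevent this because the fast group has no identifiable partner to handshake with. So the claimed configuration-equality induction is not provable, and the appeal to ``the set of robots meeting at a node in a given round is determined by that round's configuration'' is circular: the set is well defined to an omniscient observer but not locally knowable.

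The paper avoids this entirely by proving much less: each robot records the number $x$ of co-located robots that are supposed to move with it from $u$ to $v$, moves on its own schedule, and upon arriving at $v$ simply waits until all $x$ robots from $u$ have arrived before starting its next cycle. This enforces only \emph{group atomicity} (robots that move together in a synchronous round move together within one epoch), not lockstep with any particular synchronous execution. Correctness and the $O(\min\{m,k\Delta\})$ bound then carry over because the synchronous analysis (the meeting graph/meeting tree and the size-based charging in Lemmas~\ref{S1series} and~\ref{S2series}) is insensitive to the interleaving in which meetings occur; it only charges work to the sizes of subsumed components. If you want to salvage your write-up, replace the faithful-simulation induction by this weaker invariant and argue that every primitive step of Algorithms~\ref{algo:explore} and~\ref{algo:procedures} costs $O(1)$ epochs once groups are atomic; the round counter $\rho$ then becomes unnecessary (though your observation that it would fit in $O(\log(k+\Delta))$ bits is correct).
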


\vspace{1mm}
\noindent{\bf Challenges.}
The well-known DFS traversal starting from a single node of any graph $G$ (anonymous or non-anonymous) \cite{Cormen:2009} visits $k$ different nodes of $G$ in  $\min\{4m-2n+2,\, 4k\Delta\}$ rounds.
Therefore, the single-source {\dis} can be solved in $\min\{4m-2n+2,\, 4k\Delta\}$ rounds in any anonymous graph $G$ having $n$ memory-less nodes using the DFS traversal storing $O(\log(k+\Delta))$ bits at each robot.  The $k$-source {\dis} finishes in a single round, since $k$ robots are already on $k$ different nodes solving {\dis}. Therefore, the challenging case is $k'$-source {\dis} with $1<k'<k$. 

The early papers obtained better bounds on either  time or memory, trading one for another. The first algorithm of \cite{Kshemkalyani} obtained $O(\min\{m,k\Delta\})$ time bound with memory $O(k\log \Delta)$ bits at each robot. The second algorithm of  \cite{Kshemkalyani} kept memory optimal $O(\log(k+\Delta))$ bits at each robot and established time $O(\min\{m,k\Delta\}\cdot \ell)$, where $\ell\leq k'<k$ is the number of multiplicity nodes in the initial configuration. Their algorithm starts $\ell$ different single-source DFS traversals in parallel from $\ell$ sources with multiple robots on them. Each DFS traversal is given a unique ID, which is the smallest robot ID present on that source.  
Each DFS traversal leaves a robot on each new node it visits. 
If no DFS traversals meet, then $k$ robots are on $k$ different nodes and {\dis} is solved in time and memory bounds akin to the single-source DFS bounds. In case of two (or more) DFS traversals meet, the higher ID  DFS traversal subsumes the lower ID DFS traversal. The problem here is that if the lower ID DFS traversal meets the higher ID DFS traversal, in the subsumption process, the higher ID DFS traversal may again visit all the nodes  that the lower ID DFS traversal already visited. Therefore, in the worst-case, the time becomes the multiplication of $O(\min\{m,k\Delta\})$ rounds for the single-source DFS traversal times $\ell$ different DFS  traversals, i.e., in total $O(\min\{m,k\Delta\}\cdot \ell)$ rounds.  

Recent studies \cite{KshemkalyaniMS19,ShintakuSKM20} reduced the $O(\ell)$ factor in the time bound to $O(\log \ell)$.
Providing $m,k,$ and $\Delta$ parameters to the algorithm beforehand, Kshemkalyani {\it et al.} \cite{KshemkalyaniMS19} run $\ell$-source DFS traversals in passes of interval $O(\min\{m,k\Delta\})$ rounds. After each pass, they guaranteed that the $\ell$-source DFS traversal reduces to $\ell/2$-source DFS traversal. Therefore, in total $\lceil \log \ell\rceil$ passes, the $\ell$-source DFS traversal reduces to a single-source DFS traversal, which then finishes in  additional $O(\min\{m,k\Delta\})$ rounds, 
giving in the worst-case, $O(\min\{m,k\Delta\}\cdot \log \ell)$ rounds time bound. The memory requirement is $O(\log n)$ bits at each robot, due to the memory to store $m\leq n^2$ which dominates the memory to store $k\leq n$ and $\Delta<n$. Recently, Shintaku {\it et al.}   \cite{ShintakuSKM20} established the same time bound as in \cite{KshemkalyaniMS19} avoiding the requirement for the algorithm to know $m,k,\Delta$ beforehand. Moreover, they improved the memory bound $O(\log n)$ bits in \cite{KshemkalyaniMS19} to optimal $\Theta(\log(k+\Delta))$ bits at each robot.

Observing the techniques of \cite{KshemkalyaniMS19,ShintakuSKM20}, the algorithms developed there subsume different DFS traversals pairwise which helps in improving the sequential subsumption of the different DFS traversals in the algorithm of \cite{Kshemkalyani}. The implication of the pairwise subsumption is only a $O(\log \ell)$ factor more cost is needed to subsume all $\ell$ parallel DFS traversals to obtain a single DFS traversal. This $O(\log \ell)$ factor is significantly better compared to the $O(\ell)$ factor obtained due to the sequential subsumption.   

Despite these benefits, the pairwise subsumption is not matching the single-source DFS traversal time bound and, more importantly, it is not clear whether the $O(\log \ell)$ factor arising in the pairwise subsumption technique in \cite{KshemkalyaniMS19,ShintakuSKM20} can be removed from the time bound. Therefore, a new set of ideas are needed, which we develop in this paper and they constitute our main contribution.

\vspace{1mm}
\noindent{\bf Techniques.}
We use parallel multi-source DFS traversal as in \cite{KshemkalyaniMS19,ShintakuSKM20} but devise a novel subsumption technique, leading to $O(\min\{m,k\Delta\})$ time with $O(\log(k+\Delta))$ bits at each robot, removing the $O(\log \ell)$ factor from the time bound of the best previously known algorithms \cite{KshemkalyaniMS19,ShintakuSKM20} and matching the time and memory bounds for single-source DFS traversal.
Each DFS traversal constructs a {\em DFS tree}. 
Our technique executes subsumption on the two DFS traversals that meet 
based on the size of the DFS traversal measured as the number of settled robots with the same DFS tree ID. In fact, the larger size DFS traversal subsumes the smaller size DFS traversal. The subsumed DFS traversal is collapsed to a single node, collecting all the robots on that traversal at that node, and those robots are given to the subsuming DFS traversal allowing it to extend its DFS traversal. The benefit is two-fold: (i) the size of the subsumed traversal is smaller than the size of the subsuming traversal and hence the collapse and merge of the subsumed traversal to the subsuming one can be done in time proportional to the size of the subsumed traversal, and (ii) it avoids the need of revisiting the nodes of the subsumed traversal more than once, a crucial aspect in removing the $O(\log \ell)$ factor from the time bound.   Furthermore, one traversal always remains subsuming throughout the execution of the algorithm. 

This is in contrast to the technique used in the best previously known algorithms \cite{KshemkalyaniMS19,ShintakuSKM20} that uses IDs of the DFS traversals (larger ID DFS traversal subsumes smaller ID DFS traversal). The drawback of the subsumption based on DFS ID is that the algorithm cannot limit the repeating traversal of the already build DFS tree, adding  $\Theta(\log \ell)$ factor in the subsumption process, and hence leading to $O(\min\{m,k\Delta\}\cdot \log \ell)$ time bound.  

We particularly tackle two major challenges: (i) how to execute the size-based subsumption, and (ii) what to do when more than two DFS traversals meet at different nodes forming a transitive chain or more generally, a meeting graph. 
The first challenge is due to the fact that  
the exact size of the DFS traversal is only known either by its {\em head node} which is the current node that has all not-yet-settled robots belonging to that DFS traversal or by the node on which last robot belonging to that DFS traversal has settled. Therefore, it requires for the meeting traversal to traverse the met DFS tree to reach its head node to find its size. Our technique of collapsing the subsumed traversal successfully fulfills this requirement in time proportional to the size of the smaller size DFS traversal. 

The second challenge is due to the fact that
if not synchronized carefully, different DFS traversals in the transitive chain or meeting graph might run into a deadlock situation. We devise a technique that partitions the DFS traversals in the meeting graph such that in each partition, one DFS traversal subsumes the others without introducing any deadlock and in time proportional to the size of the DFS traversals that were subsumed and collapsed.



Through these techniques, we finally show that one DFS traversal (among those that meet in the meeting graph) always grows bigger and the total cost remains proportional to the total size of the DFS traversals that are  subsumed by the DFS traversal, giving our claimed time bound. Interestingly, the process is executed keeping the memory at an (asymptotically) optimal number of bits per robot.

\vspace{1mm}
\noindent{\bf Related Work.}
\label{section:related}
Augustine and Moses Jr.~\cite{Augustine:2018}  proved a memory lower bound of $\Omega(\log n)$ bits at each robot and a time lower bound of $\Omega(D)$ ($\Omega(n)$ in arbitrary graphs) for any deterministic algorithm for {\dis} on graphs. 
They then 
provided deterministic algorithms using $O(\log n)$ bits at each robot to solve {\dis} on lines, rings, and trees in $O(n)$ time. For arbitrary graphs, they gave one algorithm using $O(\log n)$ bits at each robot with $O(mn)$ time and another using $O(n\log n)$ bits at each robot with $O(m)$ time. 

Kshemkalyani and Ali \cite{Kshemkalyani} provided an $\Omega(k)$ time lower bound for arbitrary graphs for $k\leq n$. 
They then provided three deterministic algorithms for {\dis} in arbitrary graphs: (i) The first algorithm using $O(k\log \Delta)$ bits at each robot with $O(\min\{m,k\Delta\})$ time, (ii) The second algorithm using $O(D\log \Delta)$ bits at each robot with $O(\Delta^D)$ time ($D$ is diameter of graph), and (iii) The third algorithm using $O(\log(k+\Delta))$ bits at each robot with $O(\min\{m,k\Delta\}\cdot \ell)$ time. 
Kshemkalyani {\it et al.} \cite{KshemkalyaniMS19} provided an algorithm for arbitrary graph  with $O(\min\{m,k\Delta\}\cdot \log \ell)$ time using $O(\log n)$ bits memory at each robot, with the algorithm knowing $m,k,\Delta$ beforehand. 
The same time bound and improved memory bound of $O(\log(k+\Delta))$ bits were obtained in \cite{ShintakuSKM20}, without the need of  the algorithm knowing $m,k,\Delta$ beforehand. For grid graphs, Kshemkalyani {\it et al.} \cite{KshemkalyaniWALCOM20} provided an algorithm that runs in $O(\min\{k,\sqrt{n}\})$ time using $O(\log k)$ bits memory at each robot.
Randomized algorithms were presented in  \cite{tamc19,DasBS21} mainly to reduce the memory requirement at each robot.

Recently, Kshemkalyani {\it et al.} \cite{KshemkalyaniMS20} provided an algorithm for arbitrary graphs with time $O(\min\{m,k\Delta\})$ when all robots can communicate and exchange information in every round (that is even the non-co-located can communicate and exchange information, which is called the {\em global} communication model). 
The global model comes handy while dealing with subsuming the multiple DFS traversals that meet 
in the transient chain or meeting graph. 
The information each robot can have allows the head node of the highest ID 
DFS traversal (satisfying a certain property) in the transient chain/meeting graph 
to ask the head nodes of the rest of the DFS traversals to stop growing their DFS tree. This makes sure that one DFS traversal always grows and others stop as soon as they find that they were met by the DFS traversal that is of higher ID then theirs.
The result presented in this paper is different since only the co-located robots can communicate and it is called the {\em local} communication model.  In the local model, it is not possible to extend the idea that is developed for the global model.
For grid graphs, Kshemkalyani {\it et al.} \cite{KshemkalyaniWALCOM20} provided a $O(\sqrt{k})$ time algorithm with $O(\log k)$ bits at each robot in the global model. 

{\dis} in anonymous dynamic (undirected) graphs was considered in \cite{KshemkalyaniICDCS20} where the authors provided some impossibility, lower, and upper bound results. Dispersion under crash faults was considered in \cite{PattanayakS021} and under Byzantine faults  was considered in \cite{MollaMM20ALGOSENSORS,Molla-2021IPDPS} establishing a spectrum of interesting results. 

The related problem of exploration has been quite heavily studied in the literature for specific as well as arbitrary graphs, 
e.g., \cite{Bampas:2009,Cohen:2008,Dereniowski:2015,Fraigniaud:2005,MencPU17}. It was shown that a robot can explore an anonymous graph using $\Theta(D\log \Delta)$-bits memory; the runtime of the algorithm is $O(\Delta^{D+1})$ \cite{Fraigniaud:2005}. In the model where graph nodes also have memory, 
Cohen {\it et al.} \cite{Cohen:2008} gave two algorithms: The first algorithm uses $O(1)$-bits at the robot and 2 bits at each node, and the second algorithm uses $O(\log \Delta)$ bits at the robot and 1 bit at each node. The runtime of both algorithms is $O(m)$ with preprocessing time of $O(mD)$. The trade-off between exploration time and number of robots is studied in \cite{MencPU17}. 
The collective exploration by a team of robots is studied in \cite{FraigniaudGKP06} for trees. 

Another problem related to {\dis} is the scattering of $k$ robots on graphs. This problem has been mainly studied for rings \cite{ElorB11,Shibata:2016} and grids \cite{Barriere2009}. 
Recently, Poudel and Sharma \cite{Poudel18,PoudelS20} provided improved time algorithms for uniform scattering on grids. 

Furthermore, {\dis} is  related to the load balancing problem, where a given
load at the nodes  has to be (re-)distributed among several processors (nodes). This problem has been studied quite heavily in graphs, 
e.g., 
see \cite{Cybenko:1989}. 
We refer readers to 
\cite{Flocchini2012,flocchini2019} for other recent developments in these topics.

\vspace{1mm}
\noindent{\bf Roadmap.}  We discuss 
model details  in Section \ref{section:model}. 
We discuss the single-source DFS traversal of an arbitrary anonymous graph in Section \ref{section:basic}. 
We then present our (synchronous) multi-source DFS traversal algorithm  in Section \ref{section:algo}. We prove the correctness, time, and memory complexity of our algorithm in Section~\ref{analysis}. Specifically, we prove Theorem \ref{theorem:0}.   We then  discuss the extensions to the asynchronous setting, proving Theorem~\ref{theorem:1}. 
Finally, we conclude in Section \ref{section:conclusion} with a short discussion on possible future work.


\section{Model}
\label{section:model}
\noindent{\bf Graph.} 
Let $G=(V,E)$ be a connected, unweighted, and undirected graph of $n$ nodes, $m$ edges, and degree $\Delta$. 
$G$ is {\em anonymous} -- nodes do not have identifiers but, at any node, its incident edges are uniquely identified by a port number in the range $[0,\delta - 1]$, where $\delta$ is the {\em degree} of that node. ($\Delta$ is the maximum among the degree $\delta$ of the nodes in $G$.)
We assume that there is no correlation between two port numbers of an edge. 
Any number of robots are allowed to move along an edge at any time (i.e., unlimited edge bandwidth). 
The graph nodes are memory-less (do not have memory) and hence they are not able to store any information.

\vspace{1mm}
\noindent{\bf Robots.} 
Let $\cR=\{r_{1}, r_{2},\ldots,r_{k}\}$ be the set of $k\leq n$ robots residing on the nodes of $G$. 
No robot can reside on the edges of $G$, but one or more robots can occupy the same node of $G$, which we call co-located robots.  
In the initial configuration, we assume that all $k$ robots in $\cR$ can be in one or more nodes of $G$ but in the final configuration there must be exactly one robot on $k$ different nodes of $G$. Suppose robots are on $k'\leq k$ nodes of $G$ in the configuration. We denote by $\ell\leq k'$ the number of nodes in the initial configuration which have at least two robots co-located on them.  
%
%

Each robot has a unique $\lceil \log k\rceil$-bit ID taken from the range $[1,k]$. 
When a robot moves from node $u$ to node $v$ in $G$, it is aware of the port of $u$ it used to leave $u$ and the port of $v$ it used to enter $v$. 
We do not restrict time duration of local computation of the robots. 
The only guarantee is that all this happens in a finite cycle and we measure time with respect to the number of cycles until {\dis} is achieved.   
Furthermore, it is assumed that each robot is equipped with memory. 
The robots work correctly at all times, i.e., they do not experience fault.

\vspace{1mm}
\noindent{\bf Communication Model.}
This paper considers the local communication model where only co-located robots at a graph node can communicate and exchange information. This model is in contrast to the global communication model where even non-co-located robots (i.e., at different graph nodes) can communicate and exchange information.

\vspace{1mm}
\noindent{\bf Time Cycle.}
An active robot $r_i$ performs
the ``Communicate-Compute-Move'' (CCM) cycle as follows. 
\begin{itemize}
\item 
{\em Communicate:} Let $r_i$ be on node $v_i$. For each robot $r_j\in \cR$ that is co-located at $v_i$,  $r_i$ can observe the memory of $r_j$, including its own memory.  
\item {\em Compute:} $r_i$ may perform an arbitrary computation
using the information observed during the ``communicate'' portion of
that cycle. This includes determination of a (possibly) port to 
use to exit $v_i$, the information to carry while exiting, and the information to store in
the robot(s) $r_j$ that stays at $v_i$.
\item
{\em Move:} $r_i$ writes new information (if any) in the memory of a robot $r_j$ at $v_i$,  and exits $v_i$ using the computed port to reach to a neighbor node of $v_i$. 
\end{itemize}

\vspace{1mm}
\noindent{\bf Robot Activation.}
In the synchronous setting, 
every robot is active in every CCM cycle. 
In the {\em asynchronous} setting, there is no common notion of time and no assumption is made on the number and frequency of CCM cycles in which a robot can be active. The only guarantee is that each robot is active infinitely often. 
%
%

\vspace{1mm}
\noindent{\bf Time and Memory Complexity.}
For the synchronous setting, 
time is measured in {\em rounds}. 
Since a robot in the 
asynchronous settings could stay inactive for
an indeterminate but finite time, we bound a robot's
inactivity introducing the idea of an epoch. 
An {\em epoch} is the smallest interval of time within which each
robot is guaranteed to be active at least once \cite{Cord-LandwehrDFHKKKKMHRSWWW11}. 
Let $t_i$ be the time at which a robot $r_i\in \cR$ starts its CCM cycle. Let $t_{j}$ be the time at which the last robot finishes its CCM cycle. The time interval $t_j-t_i$ is an epoch.
Another important parameter is memory -- the number of bits stored at each robot. 
%
%
%
The goal is to solve {\dis} 
optimizing time and memory. 

\section{DFS traversal of a Graph (Algorithm {\em DFS(k)})}
\label{section:basic}
Consider an $n$-node arbitrary anonymous graph $G$ as defined in Section \ref{section:model}. Let all $k\leq n$ robots be positioned on a single node, say $v$, of $G$ in the initial configuration. 
Let the robots on $v$ be represented as $R(v)=\{r_1,\ldots, r_k\}$, where $r_i$ is the robot with ID $i$.
We describe here a single-source DFS traversal algorithm, $DFS(k)$, that disperses all the robots in the set $R(v)$ to exactly $k$ nodes of $G$, solving {\dis}. 
$DFS(k)$ 
will be heavily used in Section \ref{section:algo} as a basic building block. 

Each robot $r_i$ stores in its memory five variables.
\begin{enumerate}
    \item $parent$ (initially assigned $\perp$), for a settled robot denotes the port through which it first entered the node it is settled at. 
    \item $child$ (initially assigned $-1$), for an unsettled robot $r_i$ stores the port 
    that it has last taken (while entering/exiting the node). For a settled robot, it indicates the port through which the other robots last left the node except when they entered the node in forward mode for the second or subsequent time.
    \item $treelabel$ (initally assigned $\min\{R(v)\}$) stores the ID of the smallest ID robot the tree is associated with.
    \item $state  \in \{forward, backward, settled\}$ (initially assigned $forward$). $DFS(k)$ executes in two phases, $forward$ and $backtrack$ \cite{Cormen:2009}. 
    \item $rank$ (initialized to 0), for a settled robot indicates the serial number of the order in which it settled in its DFS tree.
\end{enumerate}

The algorithm pseudo-code is shown in Algorithm~\ref{algo:dfs}.
The robots in $R(v)$ move together in a DFS, leaving behind the highest ID robot at each newly discovered node. They all adopt the ID of the lowest ID robot in $R(v)$ which is the last to settle, as their $treelabel$. Let the node visited have degree $\delta$. When robots enter a node through port $child$ either for the first time in forward mode or at any time in backtrack mode, the unsettled robots move out using port $(child+1)\mod\delta$. However, when robots enter a node through port $child$ in forward mode for the second or subsequent time, they change phase from forward to backtrack and move out through port $child$. 

\begin{algorithm}[ht]
Initialize: $child\leftarrow -1$, $parent\leftarrow \perp$, $state\leftarrow forward$, $treelabel \leftarrow \min\{R(v)\}$, $rank\leftarrow 0$\\
\For{$round = 1$ to $\min\{4m-2n+2,4k\Delta\}$}{
 $child\leftarrow$ port through which node is entered\\
 \If{$state=forward$}{
  \If{node is free}{
   $rank \leftarrow rank + 1$\\
   \If{$i$ is the highest ID robot on the node}{
    $state\leftarrow settled$, $i$ settles at the node (does not move henceforth), $parent\leftarrow child$, $treelabel\leftarrow$ lowest ID robot at the node
   }
   \Else{
    $child\leftarrow (child+1)\mod\delta$, $r.child\leftarrow child$\\
    \If{$child = parent$ of robot settled at node}{
     $state\leftarrow backtrack$}
   }
  }
  \Else{
   $state\leftarrow backtrack$
  }
 }
 \ElseIf{state=backtrack}{
  $child\leftarrow (child+1)\mod\delta$, $r.child\leftarrow child$\\
  \If{$child\neq parent$ of robot settled at node}{
   $state\leftarrow forward$
  }
 }
 move out through $child$
}
\caption{Algorithm {\em DFS(k)} for DFS traversal of a graph by $k$ robots from a rooted initial configuration. Code for robot $i$. $r$ is robot settled at the current node.}
\label{algo:dfs}
\end{algorithm}

\begin{theorem}[\cite{KshemkalyaniMS19}]
Algorithm $DFS(k)$ correctly solves {\dis} for  $k\leq n$ robots initially positioned on a single node of an arbitrary anonymous graph $G$ of $n$ memory-less nodes, $m$ edges, and degree $\Delta$  in $ \min\{4m-2n+2,  4k\Delta\}$ rounds using $O(\log(k+\Delta))$ bits at each robot.
\label{dfscorrect}
\end{theorem}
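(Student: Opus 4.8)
The plan is to verify that Algorithm $DFS(k)$ implements the standard DFS traversal of $G$ rooted at $v$, and then bound its running time using the classical analysis of DFS on undirected graphs. First I would argue correctness of the traversal itself. I would show by induction on the sequence of rounds that the robots move exactly as a DFS pointer would: when the bundle of unsettled robots enters a node for the first time in $forward$ state, it settles the highest-ID robot there (recording $parent$ as the entry port, and $treelabel$ as the lowest ID present, which the remaining robots carry onward) and then probes ports in cyclic order $(child+1)\bmod\delta$; when it re-enters an already-occupied node, the settled robot's $parent$ pointer tells the bundle that this edge is a back edge (or the edge just used to recurse), triggering a switch to $backtrack$ along the port the bundle arrived on. The key invariants to maintain are: (i) the set of occupied nodes always forms a connected subtree of $G$ containing $v$; (ii) the unsettled bundle is always located at a node on the current DFS stack; and (iii) every edge incident to an occupied node is eventually explored exactly once from each endpoint. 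From (i)–(iii) it follows that the bundle keeps discovering new nodes until either $k$ nodes are occupied or the whole component is exhausted; since $k\le n$ and $G$ is connected, exactly $k$ nodes get occupied, so {\dis} is solved.

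Next I would bound the number of rounds. In the standard DFS of a connected undirected graph, each tree edge is traversed twice (once forward, once on backtrack) and each non-tree edge is traversed twice (once forward leading to a re-entry, once immediately back); a tighter accounting gives that the pointer makes at most $4m-2n+2$ edge-traversals total when all $n$ nodes are visited, and more generally the traversal halts after visiting $k$ nodes having spent $O(k\Delta)$ traversals, since each of the $\le k$ occupied nodes has degree $\le\Delta$ and is left/entered $O(\Delta)$ times. Each robot move in Algorithm $DFS(k)$ corresponds to one such edge-traversal by the bundle, so the loop bound $\min\{4m-2n+2,\,4k\Delta\}$ is both sufficient (the traversal completes within it) and the stated time bound. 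Finally, for the memory claim, I would simply observe that each robot stores the five variables $parent, child, treelabel, state, rank$; $parent$ and $child$ are port numbers in $[0,\Delta-1]$ needing $O(\log\Delta)$ bits, $treelabel$ is a robot ID in $[1,k]$ needing $O(\log k)$ bits, $rank\le k$ needs $O(\log k)$ bits, and $state$ is $O(1)$ bits, for a total of $O(\log(k+\Delta))$ bits.

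The main obstacle I anticipate is the careful case analysis showing the traversal never stalls or loops, in particular handling the port-arithmetic bookkeeping at a node that is entered multiple times. One must check that $child$ is correctly updated both on the settled robot and on the moving robots so that, upon each re-entry, the bundle resumes probing from the right port; that the condition ``$child=parent$ of the robot settled at the node'' correctly detects when all ports at a node have been cycled through (so the bundle should backtrack toward the root); and that the asymmetry of port labelings across an edge (``no correlation between the two port numbers of an edge'') does not break the back-edge detection — a re-entry through a port that does not equal the local $parent$ must still be recognized as revisiting an occupied node and handled by switching to $backtrack$ and immediately leaving along the arrival port. These are the subtle points, but they are exactly the invariants (i)–(iii) above, and the analysis is a routine, if delicate, induction. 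Since the statement is quoted as Theorem~\ref{dfscorrect} from \cite{KshemkalyaniMS19}, I would also cross-reference that proof rather than re-derive every detail.
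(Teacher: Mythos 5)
Your proposal follows essentially the same route as the paper's proof: correctness by checking that the algorithm simulates the standard DFS traversal of the port-labeled graph from the single source, the time bound by counting edge traversals to get $\min\{4m-2n+2,\,4k\Delta\}$ (with the $4k\Delta$ term coming from each node being entered at most $4\Delta$ times), and the memory bound by enumerating the $O(\log(k+\Delta))$-bit variables. The only slip is your claim that each non-tree edge is traversed just twice --- in this port-based DFS a non-tree edge is probed from both endpoints and hence crossed four times (tree edges twice), which is exactly where $4m-2n+2$ comes from --- but since you invoke that bound anyway, the conclusion is unaffected.
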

\onlyLong{
\begin{proof}
We first show that {\dis} is achieved by $DFS(k)$. Because every robot starts at the same node and follows the same path as other not-yet-settled robots until it is assigned to a node, $DFS(k)$ resembles the DFS traversal of an anonymous port-numbered graph \cite{Augustine:2018} with all robots starting from the same node. 
Therefore, $DFS(k)$ visits $k$ different nodes, where each robot is settled. 


We now prove time and memory bounds. The DFS traversal may take up to $4m-2n+2$ rounds, with each forward edge being traversed twice and each backward edge being traversed 4 times (once in either direction in the forward phase and once in either direction in the backward phase) \cite{Kshemkalyani}. However, in $4k \Delta$ rounds, $DFS(k)$ is guaranteed to visit at least $k$ different nodes of $G$ because in the DFS, each edge can be traversed at most 4 times and hence at most $4\Delta$ traversals can visit a particular node \cite{Kshemkalyani}. If $4m-2n+2<  4k \Delta$, $DFS(k)$ visits all $n$ nodes of $G$. Therefore, it is clear that the runtime of $DFS(k)$ is $ \min(4m-2n+2, 4k\Delta)$ rounds. Regarding memory, variable $treelabel$  takes $O(\log k)$ bits, $state$ takes $O(1)$ bits, and $parent$ and $child$ take $O(\log \Delta)$ bits.
The $k$ robots can be distinguished through $O(\log k)$ bits since their IDs are in the range $[1,k]$.
Thus, each robot requires $O(\log(k+\Delta))$ bits.  
\end{proof}
}


\section{The Algorithm}
\label{section:algo}

The {\em root} of a DFS $i$ (which equals the identifier ($treelabel$)) is the node where the first robot settles. This is the settled robot having $rank=1$. The {\em head} of a DFS $i$ is the node where the unsettled robots (if any) of that DFS are currently located at, or else it is the node where the last robot of that DFS settled. Node $root(i)$ is reachable by following $parent$ pointers; node $head(i)$ is reachable by following $child$ pointers.

In the initial configuration, if robots are at $k'<k$ nodes ($k'=k$ solves {\dis} in the first round without any robot moving), 
$k'$ DFS traversals are initiated in parallel. A DFS $i$ {\em meets} DFS $j$ if the robots of DFS $i$ arrive at a node $x$ where a robot from DFS $j$ is settled. Node $x$ is called a {\em junction} node of $head(i)$. If robots from multiple  arrive at a node where there is no settled robot, the robot from the DFS with the highest ID settles in that round and the other DFSs are said to meet this DFS. 

The {\em size} of a DFS (function $d_i$) is the number of settled robots in that DFS. When DFS $i$ meets DFS $j$, the first task is to determine whether $d_i > d_j$ or $d_j > d_i$, where we define a total order ($>$) by using the DFS IDs as tiebreakers if the number of settled robots is the same. $d_i$ is known to robots of DFS $i$ at $head(i)$ by reading $rank$ of DFS tree $i$.  The unsettled robots at $head(i)$ traverse DFS $j$ to $head(j)$ in an exploration to determine $d_j$. If they reach $head(j)$ without encountering a node with $rank$ greater than $d_i$, then $d_i > d_j$. The junction $head(j)$ is defined to be {\em locked} by $i$ if DFS $i$'s robots are the first to reach $head(j)$ in such an exploration (and at this time, $j$'s exploratory robots have yet to return to $head(j)$). However, if the exploratory robots of DFS $i$ encounter a node with $rank$ greater than $d_i$ before reaching $head(j)$, they return to $head(i)$ as $d_j >d_i$. A key advantage of this mechanism is that $d_i>d_j$ can be determined in time proportional to $\min\{d_i,d_j\}$.

Knowing the sizes, the general idea is that if $d_i$ is greater, DFS $j$ is {\em subsumed} by DFS $i$ and DFS $j$ collapses by having all its robots collected to the $head(i)$ to continue DFS $i$. This collapse however cannot begin immediately because $j$'s robots may be exploring the DFS $l$  it has met and they must return to $head(j)$ before $j$ starts its collapse. (The algorithm ensures there are no such cyclic waits to prevent deadlocks.) However, if $d_j$ is greater, DFS $i$ gets {\em subsumed}, i.e., DFS $j$ subsumes DFS $i$. The free robots of $i$ exploring $j$ return to $head(i)$, DFS $i$ collapses by having all its robots collected to $head(i)$, and then they all move to $head(j)$ to continue DFS $j$. Now, these above policies regarding which DFS collapses and gets subsumed by which other have to be adapted to the following fact -- due to concurrent actions in different parts of $G$, a DFS $j$ may be met by different other DFSs, and DFS $j$ may in turn meet another DFS concurrently. Further, transitive chains of such meetings can occur concurrently. This leads us to formalize  the notion of a {\em meeting graph}.

\begin{definition}
\label{mg}
{\bf (Meeting graph.)}
The directed meeting graph $G'=(V',E')$ is defined as follows. $V'$ is the set of concurrently existing DFS IDs. There is a (directed) edge in $E'$ from $i$ to $j$ if DFS $i$ meets DFS $j$.
\end{definition}

Nodes in $V'$ have an arbitrary in-degree ($< k'$) but at most out-degree =1. There may also be a cycle in each connected component of $G'$. Henceforth, we focus on a single connected component of $G'$ by default; other connected components are dealt with similarly. The algorithm implicitly partitions a connected component of $G'$ into (connected) sub-components such that each sub-component is defined to have a master node $M$ into which all other nodes of that sub-component are subsumed, directly or transitively. In this process, at most one cycle in any connected component of $G'$ is also broken. In each sub-component, the master node $M$ has the highest value of $d$ and the other smaller (or equal sized) nodes, i.e., DFSs, get subsumed.
The pseudocode is given in Algorithm~\ref{algo:explore}  and in Algorithm~\ref{algo:procedures}. 
In Algorithm~\ref{algo:explore}, $j$ is explored by robots from $i$ to determine if $d_i>d_j$ (therefore, we sometimes call Algorithm~\ref{algo:explore} {\em Exploration}), and the appropriate procedures for collapsing and collecting are given in Algorithm~\ref{algo:procedures} (therefore, we sometimes call Algorithm~\ref{algo:procedures}  {\em various procedures invoked}).

For any given node $i \in V'$, its master node is given as per Algorithm~\ref{algo:master}. Note that this algorithm is not actually executed and the master node of a node need not be known -- it is given only to aid our understanding and in the complexity proof. If $master(j)$ gets invoked directly or transitively in the invocation of $master(i)$ for any $i$, then $i$ must be subsumed and its robots collected completely before $j$ gets subsumed and its robots are collected completely. 

A path in $G'$ is an {\em increasing} ({\em decreasing}) path if the node sizes along the path are increasing (decreasing). 
For a master node $M$, the nodes $x$ in its sub-component of $G'$ that directly and transitively participate in only $Collapse\_Into\_Parent$ and no $Collapse\_Into\_Child$ until collapsing into $M$ form the set $X(M)$. 
Whereas the (other) nodes $y$ in the sub-component that directly and transitively invoke at least one $Collapse\_Into\_Child$ until they collapse into $M$ belong to the set $Y(M)$. The component $C(M)= X(M) \cup Y(M) \cup \{M\}$.

A component $C(M)$ is acyclic. For an edge $(i,j)$, $i$ is the child and $j$ is the parent.
Nodes in the set $X$ have an increasing path to the master node. They collapse into and get subsumed by the master node  (possibly transitively) by executing $Collapse\_Into\_Parent$. Nodes in the  set $Y$ are reachable from the master node on a decreasing path -- such nodes are termed $Y\_trunk$ nodes, or have a increasing path to a $Y\_trunk$ node -- such nodes are termed $Y\_branch$ nodes.
Nodes in $Y$ (i.e., in $Y\_trunk$ and $Y\_branch$) collapse into and get subsumed by the master node, possibly transitively. First, the $Y\_branch$ nodes collapse into and get subsumed by their ancestors on the increasing path ending in a $Y\_trunk$ node by executing $Collapse\_Into\_Parent$; then the $Y\_trunk$ nodes collapse and get subsumed into their child nodes along $Y\_trunk$ and then into the master node by executing $Collapse\_Into\_Child$.

After nodes in $C(M)$ get subsumed in $M$, the master node grows again until involved in more meetings and new meeting graphs are formed. Thus the meeting graph is dynamic.
We define a related notion of a {\em meeting tree} that represents which nodes (DFSs) have met and been subsumed by which master node, in which meeting sequence number of meetings for each such node. 

\begin{definition}
\label{defn:mt}
({\bf Meeting tree.})
The $k'$ initial DFSs $i$ form the $k'$ leaf nodes $(i,0)$ at level 0. 
When $\alpha$ nodes $(a_i,h_i)$ for $i\in[1,\alpha]$ meet in a component and get subsumed by the master node with DFS identifier $M$ of the meeting graph, a node $(M,h)$, where $h=1+\max_{i\in[1,\alpha]}h_i$, is created in the meeting tree as the parent of the child nodes $(a_i,h_i)$, for $i\in[1,\alpha]$.
\end{definition}
For a node $(M,h)$, $h$ is the length of the longest path from some leaf node to that node. We now formally define $X(M,h)$, $Y(M,h)$, and $C(M,h)$.

\begin{definition}
\label{defn:cmh}
{\bf (Component $C(M,h)$.)}
\begin{enumerate}
\item $X(M,h)$ is the set of child nodes in the meeting tree that directly and transitively participate only in $Collapse\_Into\_Parent$ until collapsing into $(M,h)$. 
\item $Y(M,h)$ is the set of child nodes in the meeting tree that directly and transitively participate in at least one $Collapse\_Into\_Child$ until collapsing into $(M,h)$. 
\item $C(M,h)= X(M,h) \cup Y(M,h) \cup \{(M,prev(h))\}$, where for any $z \in C(M,h)$, $z=(a,prev(h))$ and $prev(h)$ is defined as the highest value less than $h$ for which node $(a,prev(h))$ has been created. 
\end{enumerate}
\end{definition}

For any node $(i,h)$, we also define $next(h)$ as the value $h'$ such that $(i,h) \in C(M,h')$ for some $M$. If such a $h'$ does not exist, we define it to be $k'$.

We omit the $h$ parameter in $(i,h)$ and $C(M,h)$ in places where it is understood or not required.

\begin{algorithm}[!ht]
Explorers move to $root(parent(i))$ leaving $retrace$ pointers for return path. Then they follow $child$ pointers from $root(parent(i))$ to $head(parent(i))$. There are 4 possibilities.\\
\If{$d_{parent(i)} > d_i$, i.e., $rank > d_i $ is encountered, explorers do not reach next junction}{
 return to $head(i)$ junction\\
 \If{$head(i)$, i.e., $i$ is not locked}{
  $Collapse\_Into\_Parent(i)$\\
 }
 \ElseIf{$head(i)$ is locked by $j$}{
  $Collapse\_Into\_Child(i,j)$\\
 }
}
\ElseIf{$head(parent(i))$ is reached at next junction}{
 lock next junction\\
 traverse $parent(i)$ informing each node (a) that $parent(i)$ is locked and will be collapsing, and also (b) value of $d_{parent(i)}$, and return to $head(parent(i))$\\
 wait until $parent(i)$'s explorers return from $parent(parent(i))$\\
 follow action ($Collapse\_Into\_Child(parent(i),i)$) which will be determined on their return
}
\ElseIf{exploring robots find $parent(i)$ is collapsing or learn that $parent(i)$ is locked and will be collapsing}{
 $Parent\_Is\_Collapsing$
}
\ElseIf{explorers $E$ path meets another explorers $F$ path}{
 wait until $F$ return\\
 \If{$parent(i)$ is collapsing}{
  $Parent\_Is\_Collapsing$
 }
 \ElseIf{$parent(i)$ is not collapsing}{
  continue $E$'s exploration
 }
}
\caption{Algorithm {\em Exploration} to explore $parent(i)$ component on reaching junction $head(i)$ by DFS of component $i$.}
\label{algo:explore}
\end{algorithm}

\begin{algorithm}[t]
\underline{{\em Collapse\_Into\_Child(i,j)}}\\
explorers of $i$ go from $head(i)$ locked by $j$ to $root(i)$\\
do $i$'s DFS tree traversal collecting all robots to collapse path ($root(i)$ to $head(j)$) marked by {\em retrace} pointers, waiting until $collapsing\_children=0$ at each node\\
from $root(i)$ collect all robots accumulated on collapse path to $j$'s junction $head(j)$\\
collapsed robots change ID to $j$\\
\If{$head(j)$ is locked by $l$}{
 $Collapse\_Into\_Child(j,l)$
}
\ElseIf{$head(j)$ is not locked}{
 continue $j$'s DFS
}
\underline{{\em Collapse\_Into\_Parent(i)}}\\
Robot at $head(i)$ increments $collapsing\_children$\\
Explorers of $i$ go from $head(i)$ to $root(i)$ leaving $collapse$ pointers\\
do $i$'s DFS tree traversal collecting all robots to collapse path ($root(i)$ to $head(i)$) marked by {\em collapse} pointers, waiting until $collapsing\_children=0$ at each node\\
from $root(i)$ collect all robots accumulated on collapse path to $i$'s junction $head(i)$\\
robot at $head(i)$ decrements $collapsing\_children$\\
collapsed robots change ID to $parent(i)$\\
go to 
$head(parent(i))$ by following $child$ pointers\\
\If{$parent(i)$ along the way is found to be collapsing}{
 collapse with it; break()
}
\If{$head(parent(i))$ is free}{
 continue $parent(i)$'s DFS 
}
\ElseIf{$head(parent(i))$ is blocked and possibly also locked}{
 wait until $parent(i)$ collapses (and collapse with it) or becomes unblocked (and continue $parent(i)$'s action)
}
\underline{{\em Parent\_Is\_Collapsing}}\\
retrace path to $head(i)$ junction\\
\If{$d_i < d_{parent(i)}$ and $head(i)$ junction is not locked}{
 $Collapse\_Into\_Parent(i)$
}
\ElseIf{$d_i > d_{parent(i)}$ and $head(i)$ junction is not locked and remains unlocked until $parent(i)$'s collapse reaches $head(i)$}{
 unsettled robots get absorbed in $parent(i)$ during its collapse
}
\ElseIf{$head(i)$ junction of $i$ (is locked by $j$) or (gets locked by $j$ before $parent(i)$'s collapse reaches $head(i)$ and $d_i$ $>$ $d_{parent(i)}$)}{
 $Collapse\_Into\_Child(i,j)$
}
\caption{Algorithms {\em Collapse\_Into\_Child}, {\em Collapse\_Into\_Parent}, and {\em Parent\_Is\_Collapsing}.}
\label{algo:procedures}
\end{algorithm}

\begin{algorithm}[!h]
\underline{$master(i)$}\\
\If{$d_{parent(i)}$ $>$ $d_i$}{
 $t1$ $\leftarrow$ time when explorers of $i$ return to $head(i)$ from $parent(i)$\\
 $t2$ (initialized to $\infty$) $\leftarrow$ the time, if any, when first child $j$ locks $head(i)$\\
 \If{$t1<t2$}{
  $w\leftarrow parent(i)$
 }
 \ElseIf{$t2>t1$}{
  $w \leftarrow j$
 }
 return($master(w)$)
}
\Else{
 \If{$\exists$ a first child $j$ to lock $head(i)$}{
  return($master(j)$)
 }
 \Else{
  return($i$)
 }
}
\caption{Algorithm {\em Determine\_Master(i)} to identify master component in which component $i$ will collapse}
\label{algo:master}
\end{algorithm}

\section{Analysis of the Algorithm}
\label{analysis}
In our algorithm, a common module is to traverse an already identified DFS component with nodes having the same $treelabel$. This can be achieved by going to $root(i)$ and doing a (new) DFS traversal of only those nodes (using a duplicate set of variables $state$ and $parent$ for DFS); if you reach a node which has no settled robot or a settled robot having a different $treelabel$, one simply backtracks along that edge. Such a DFS traversal occurs in (i) Algorithm {\em Exploration} when $d_i > d_{parent(i)}$ and $i$ locks $head(parent(i))$ junction, (ii) procedure $Collapse\_Into\_Child$, and (iii) procedure $Collapse\_Into\_Parent$, and can be executed in $4\Delta d_i$ steps.  In (ii) and (iii), a settled robot not on the collect path gets unsettled and gets collected in the DFS traversal to the collect path when the DFS backtracks from the node where the robot was settled.

The time complexity of Algorithms~\ref{algo:explore} ({\em Exploration}) and ~\ref{algo:procedures} ({\em various procedures invoked}) is as follows.

\begin{enumerate}
\item Algorithm~\ref{algo:explore} takes time bounded by $8d_i\Delta + 3d_i$.  The derivation is as follows.
\begin{enumerate}
\item $\min\{d_i,d_{parent(i)}\}$ to go from $head(i)$ to $root(parent(i))$.
\item $4\min\{d_i,d_{parent(i)}\}\Delta$ to go then to $head(parent(i))$.
\item if $d_{parent(i)} > d_i$, then $2d_i$ to return to $head(i)$ via $root(parent(i))$.
\item if $d_{parent(i)} < d_i$ and $i$ locks $head(parent(i))$, then  $4d_{parent(i)}\Delta+2d_{parent(i)}$ for DFS traversal of $parent(i)$ component from $root(parent(i))$ plus to  $root(parent(i))$ from $head(parent(i))$ and back. 
\end{enumerate}
\item In Algorithm~\ref{algo:procedures},
\begin{enumerate}
    \item $Collapse\_Into\_Child$ takes $4d_i\Delta + 2d_i$. 
    
    Time $d_i$ to go from $head(i)$ to $root(i)$; $4\Delta d_i$ for a DFS traversal of $i$ component from $root(i)$; and $d_i$ to collect the accumulated robots from $root(i)$ to $head(j)$ along the collapse path.
    \item $Collapse\_Into\_Parent$ takes $4d_i\Delta + 2d_i + 4d_{parent(i)}\Delta$. 
    
    Time $d_i$ to go from $head(i)$ to $root(i)$; $4\Delta d_i$ for a DFS traversal of $i$ component from $root(i)$; $d_i$ to collect the accumulated robots from $root(i)$ to $head(i)$; and $4d_{parent(i)}\Delta$ to then go to $head(parent(i))$.
    \item The cost of $Parent\_Is\_Collapsing$ is $\min\{d_i,d_{parent(i)}\}$ but is subsumed in the cost of Algorithm~\ref{algo:explore}.
    
    This cost is to return to $head(i)$ from the exploration point in $parent(i)$ component where it is invoked. 
\end{enumerate}
\end{enumerate}

The contributions to this time complexity by the various nodes in $C(M)$ are as follows. (The cost is given as the sum of Algorithm {\em Exploration} plus appropriate invoked procedure costs.)

\begin{enumerate}
\item Each $x\in X$ executes $Collapse\_Into\_Parent$ after $Exploration$, as it is part of an increasing path. So it contributes the sum of the two contributions, giving $12d_x\Delta + 5d_x + 4d_{parent(x)}\Delta$. 

The $4d_{parent(x)}\Delta$ is for traversing to $head(parent(x))$ after $x$ collapses to $head(x)$, and this can be done concurrently by multiple $x$ that are children of the same parent. As each $x$ can be thought of as the $parent$ of another element in $X$, so the cost of subsuming the $X$ set is $\sum_{x\in X} 16d_x\Delta + 5d_x$ + (if $X\neq\emptyset$, $4d_M\Delta$).
\item Each $y\in Y\_branch$ executes $Collapse\_Into\_Parent$ after $Exploration$, as it is part of an increasing path. So it contributes  the sum of the two contributions, giving $16d_y\Delta + 5d_y$. 

Each $y\in Y\_trunk$ executes $Collapse\_Into\_Child$ after $Exploration$, as it is part of a decreasing path. So it contributes the sum of the two contributions, giving $12d_y\Delta + 5d_y$, plus it potentially acts as a parent of a node on a $Y\_branch$ that executed $Collapse\_Into\_Parent$ so it contributes an added $4d_y\Delta$, giving a total of $16d_y\Delta + 5d_y$.
\item
Node $M$ will contribute in Algorithm $Exploration$ $4\min\{d_M,d_{parent(M)}\}\Delta + \min\{d_M,d_{parent(M)}\}$, plus $4d_{parent(M)}\Delta+2d_{parent(M)}$ as $parent(M)$ is smaller. Thus, a total of $8d_{parent(M)}\Delta + 3d_{parent(M)}$. This can be counted towards a contribution by $parent(M) = y\in Y$, thus the contribution of each $y\in Y$ can be bounded by $24d_y\Delta + 8d_y$ with node $M$ contributing nil.
\end{enumerate}

There is another source of time overhead contributed by nodes in $Y\_trunk \cup \{M\}$.
Nodes $y$, i.e., $head(y)\in G$, for $y\in Y\_trunk$, are locked by their child. Before this can happen, other children of $y$ may be exploring $y$ by leaving {\em retrace} pointers. However, due to the $O(\log (k+\Delta))$ bits bound on memory at each robot, a retrace pointer at a node in $y$ can be left by only $O(1)$ children, not by $O(k')$ children.  Therefore in Algorithm~\ref{algo:explore}, if explorers $E$ path meets another explorers $F$ path, they wait at the meeting node until $F$ return. If they learn that the $y$ is collapsing, they retrace to their $head$ nodes else if they learn $y$ is not collapsing, they continue their exploration towards $head(y)$ but may be blocked again if their path meets another explorers' path. This waiting due to concurrently exploring children introduces delays. (Similar reasoning can be used for $M$ delaying its children in $X$ due to explorations of other children not in $X$.)

A child of $y$ outside $Y\_trunk$ may be either locked ($l$) or unlocked ($u$) and is also smaller ($S$) or larger ($L$) than $y$. Thus, there are 4 classes of such children.
\begin{enumerate} 
\item $Su$-type children belong to $Y\_branch$ and their introduced delays are already accounted for above. 
\item Each $Lu$-type and $Ll$-type child does not contribute any delay. This is because even though these children are larger than $y$, they are not the child in $Y$ who succeeds in locking $y$; the child in $Y$ who locks $y$ does so before such $L*$-type children try to explore $y$ and try to lock $y$. Such $L*$-type children learn that $y$ is collapsing. 
\item Each $Sl$-type child node $b$ contributes delay $4d_b\Delta + d_b$. The sum of such delays at $y$ is denoted $t_{y(M,h)}$. Later, we show how to bound the sum of such delays across multiple $M$ and $h$.
\end{enumerate}

Thus far, the size $d_i$ of node $i$ referred to the number of settled robots in it, and is henceforth referred to as $d_i^s$. More specifically, $d^s_{i,h}$ will refer to the number of settled robots up until just before the $next(h)$ meeting of $i$. The number of unsettled robots in $i$ up until just before the $next(h)$ meeting of $i$ is referred to as $d_{i,h}^u$. Let $T(M,h)$ denote the time to settle DFS $M$ up until depth $h$ of the meeting tree, and from then on until the next meeting ($next(h)$) for $M$. The collapse and collection time to $head(M)$ has components $c(M,h)$ and $g(M,h)$. $c(M,h)$ has a upper bound factor of $(24\Delta + 8)$ for $x\in X$ and $y\in Y$ as derived above. The time for dispersion/settling after collection and until the $next(h)$ meeting is $s(M,h)$. These are defined as follows.
\begin{align}
c(M,h) & = & (24\Delta+8)(\sum_{x\in X(M,h)} d_x^s + \sum_{y\in Y(M,h)} d_y^s)  \nonumber \\
 & & (+ 4\Delta(d^s_{M,prev(h)}) \, if \, X(M,h)\neq\emptyset)
\end{align}
\begin{align}
s(M,h) = \left\{
\begin{array}{ll}
    4\Delta(d^s_{M,h} - d^s_{M,prev(h)}) & \mbox{if $next(h)<k'$} \\
    4\Delta(\sum_{x\in X(M,h)} d_x^s + \sum_{y\in Y(M,h)} d_y^s & \mbox{otherwise} \\
    + \sum_{x\in X(M,h)} d_x^u + \sum_{y\in Y(M,h)} d_y^u & \\
    + d_{M,prev(h)}^u) &
\end{array}
\right.
\end{align}
\begin{align}
    g(M,h) = \sum_{y\in Y(M,h)}t_y
\end{align}

This process of collapsing and collecting for instance $(M,h)$ began at the very latest (since the start of the algorithm) at the time at which the latest of the $x$ nodes, $x'$, got blocked. 

Thus,
\begin{align}
    T(M,h) \leq \overbrace{c(M,h) + s(M,h)}^{f(M,h)} + g(M,h) + T(x',prev(h)), \nonumber \\
x'= argmax_{x\,|\,(x,prev(h))\in X(M,h)\cup\{(M,prev(h))\}} T(x,prev(h)), \nonumber \\
c(*,0) = 0, g(*,0) = 0, s(*,0) = d^s_{*,0}.
\end{align}
We break $T(M,h)$ into two series, and bound them separately. The two series are:
\begin{align}
S1 & = & f(M,h) + f(x'(M,h),prev(h)) \nonumber \\ 
 &  & + f(x'(x'(M,h),prev(h)),prev(prev(h))) + \cdots + f(*,0) \nonumber\\
S2 & = & \sum_{y\in Y(M,h)} t_y + \sum_{y\in Y(x'(M,h),prev(h))} t_y + \cdots + \sum_{y\in Y(*,0)} t_y
\end{align}

\begin{lemma}
\label{S1series}
The sum in the series $S1$ is $O(k\Delta)$.
\end{lemma}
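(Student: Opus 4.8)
\textbf{Proof proposal for Lemma~\ref{S1series}.}

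The plan is to bound the series $S1$ by showing that it is a geometric-like sum controlled by the sizes $d^s$ along a single root-to-leaf path in the meeting tree, and then to observe that the sizes along that path are strictly increasing, so the sum telescopes to $O(k\Delta)$. First I would unpack the recursion: $S1 = \sum_{q\geq 0} f(a_q, h_q)$, where $(a_0,h_0) = (M,h)$ and $(a_{q+1},h_{q+1}) = (x'(a_q,h_q), prev(h_q))$ is the child node in $X(a_q,h_q)\cup\{(a_q,prev(h_q))\}$ whose subtree took the longest to settle. Thus the $a_q$'s trace a single path in the meeting tree from $(M,h)$ down toward a leaf. Recalling $f(M,h) = c(M,h) + s(M,h)$ and the derived bounds, each term is at most $(28\Delta + 8)\bigl(d^s_{a_q,h_q} + d^u_{a_q,h_q}\bigr)$ (absorbing the extra $4\Delta\, d^s_{M,prev(h)}$ term and the $s(M,h)$ contribution into this factor; in the $next(h)<k'$ case $s$ contributes even less).

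The key structural fact I would establish is that along this path the quantity $d^s_{a_{q},h_{q}}$ dominates the \emph{sum} $\sum_{x\in X(a_q,h_q)} d^s_x + \sum_{y\in Y(a_q,h_q)} d^s_y + d^s_{a_{q+1}, prev(h_q)}$ of all the settled-robot counts being merged at that meeting --- indeed, by the definition of the meeting tree and Definition~\ref{defn:cmh}, $d^s_{a_q,h_q}$ is exactly (at least) this sum, since all robots settled in the subsumed components plus those of the previous incarnation of $a_q$ become the settled robots of $(a_q,h_q)$. In particular $d^s_{a_{q+1},h_{q+1}} \leq d^s_{a_q,h_q}$, and more than that, $d^s_{a_{q+1},h_{q+1}} \le d^s_{a_q,h_q} - (\text{contributions of the other children})$. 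Since the $a_q$ are all distinct DFS incarnations lying on one path, and the total number of settled robots is at most $k$, the sum $\sum_q d^s_{a_q,h_q}$ is at most $\sum_q d^s_{a_q,h_q}$ where each robot is counted once per level it survives; but because the merged components are disjoint in their settled robots, the standard charging argument gives $\sum_q d^s_{a_q,h_q} = O(k)$ --- more carefully, I would argue that each of the $k$ robots contributes to $d^s_{a_q,h_q}$ for the levels $q$ on a single path, and across the whole path the sizes form a chain where the ``new'' robots added at each level are disjoint, while re-counted old robots are bounded because sizes strictly increase up the path. The unsettled counts $d^u_{a_q,h_q}$ are handled the same way, noting total unsettled robots is also $\le k$ and they too get absorbed monotonically.

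The main obstacle I anticipate is the re-counting issue: a robot settled early can appear in $d^s_{a_q,h_q}$ for \emph{every} $q$ on the path above it, so a naive bound gives $O(k \cdot \text{depth})$, not $O(k)$. The resolution --- which is the crux of the whole paper's improvement --- is that sizes at least \emph{double} (or grow by the size of the subsumed component, which is at least as large as any single other merged piece only in the subsuming direction) is too strong; instead the correct claim is subtler: the path $a_0, a_1, \dots$ is chosen as $x'$, the argmax of settling \emph{time}, and one shows by the recursion that $T(a_{q+1},h_{q+1})$ plus the newly-merged work at level $q$ accounts for all robots merged at level $q$ \emph{exactly once more}; telescoping $T(M,h) \le \sum_q f(a_q,h_q) + (\text{base})$ and using that the base case $s(*,0) = d^s_{*,0}$ is a single robot, the sum $\sum_q d^s_{a_q,h_q}$ counts robot $r$ a number of times equal to the number of path-levels at which $r$'s component is a proper descendant being merged --- but each such merge strictly increases the component ID/size ordering, and since we only follow the $x'$ (increasing-path, $X$-set) branch, the size at least increases by $1$ each step and the total is a sum $1 + 2 + \cdots$? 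No: the saving point is that along the $X$-only path we follow, $d^s_{a_{q+1}} < d^s_{a_q}$ \emph{strictly} because at least one other child of size $\ge 1$ was also merged, and more importantly the total robot-count telescopes as $\sum_q (d^s_{a_q} - d^s_{a_{q+1}})$ covers disjoint robot sets. I would therefore decompose $d^s_{a_q,h_q} = d^s_{a_{q+1},h_{q+1}} + (\text{robots freshly merged at level } q)$, so that $\sum_q d^s_{a_q,h_q} = \sum_q \sum_{r\le q}(\text{fresh robots at level } r) \le \sum_r (q_{\max} - r + 1)\cdot|\text{fresh}_r|$; to kill the depth factor I invoke that fresh robots at level $r$ number at least, say, half of $d^s_{a_r}$ in the merging we follow (master has the max size, the merged-in others collectively are compared against it) --- establishing this "constant fraction fresh" property, or finding the right potential function, is the technical heart, and once it holds the sum is geometric and bounded by $O(k)$, whence $S1 = O(k\Delta)$ after multiplying by the $(28\Delta+8)$ factor.
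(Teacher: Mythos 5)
There is a genuine gap, and in fact two problems. First, by your own admission the proof is incomplete: you state that establishing the ``constant fraction fresh'' property (or finding the right potential function) ``is the technical heart,'' and you never establish it. Second, and more seriously, the property you would need is false for the quantity you chose to charge. Your plan bounds each term $f(a_q,h_q)$ by $(28\Delta+8)\bigl(d^s_{a_q,h_q}+d^u_{a_q,h_q}\bigr)$, i.e.\ by the \emph{total} robot count of the master after the merge, and then hopes that this total grows by a constant factor along the chosen path so that the sum telescopes to $O(k)$. But consider a single component $M$ that has already accumulated $\Theta(k)$ robots and thereafter repeatedly grows and meets small (even constant-size, fully dispersed) components one at a time: each such meeting creates a new meeting-tree node for $M$, the path you follow is just $(M,h),(M,prev(h)),\ldots$, each of your per-level terms is $\Theta(k\Delta)$ while only $O(1)$ fresh robots join per level, and there can be up to $k'-1$ such levels. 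So your intermediate claim $\sum_q (d^s_{a_q,h_q}+d^u_{a_q,h_q})=O(k)$ is false, and no fresh-fraction or doubling argument can rescue a per-term bound that charges the master's full size at every level. Relatedly, your ``key structural fact'' that $d^s_{a_q,h_q}$ alone dominates the sum of the merged settled counts fails precisely when the collected robots get no time to re-settle before the next meeting; this is exactly the dichotomy (the paper's Case 1 versus Case 2) that your argument blurs by lumping $d^s$ and $d^u$ together.

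The paper's proof avoids this by never charging the master's size per level: $c(M,h)$ is proportional to the settled sizes of the \emph{subsumed} components in $X(M,h)\cup Y(M,h)$, the only master-sized term $4\Delta\,d^s_{M,prev(h)}$ is incurred only when $X(M,h)\neq\emptyset$, and $s(M,h)$ telescopes over newly settled robots when further meetings follow. It then analyzes two extreme regimes of the series -- merges followed immediately by the next meeting, where the per-level cost stays $O(\eta d\Delta)$ but the number of levels is bounded by $\log_\eta(k/d)$ because the total robot count multiplies by $\eta$ per level, and merges followed by (almost) full dispersal, where the per-level cost itself grows geometrically and the sum is dominated by the last term -- and it treats separately the special case of a lone master repeatedly absorbing dispersed components, where $X(M,h')=\emptyset$ for all $h'$ and only the small components' sizes (which are disjoint across levels) are charged. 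Some version of this accounting, which charges work to subsumed components and to newly settled robots rather than to the master's running total, is exactly what your proposal is missing; without it the recounting problem you correctly identified remains unresolved.
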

\begin{proof}
We consider levels of the meeting tree from level 1 upwards to $h$ ($\leq k'-1$). Let $\eta$ DFS components collapse and merge into one of them, and let the size (i.e., number of settled robots) of each component be $d$. We consider two extreme cases and show for each that the lemma holds.
\begin{enumerate}
\item Case 1: At each level when components collapse and collect in a master component, immediately afterwards (before the collected unsettled robots can settle) the master component meets another component at the next level, and the collapse and collection happen at the next level. Again, immediately afterwards, the (new) master component meets another component at the yet next higher level, and so on till level $h$. This case assumes $s(i,*)=0$.

\begin{enumerate}
\item At level 1, $\eta$ components of size $d$ each merge into one of size $d$ in $O(\eta d\Delta)$ time, leading to a total of $\eta d$ robots in the master component.
\item At level 2, $\eta$ components of size $d$ each merge into one of size $d$ in $O(\eta d\Delta)$ time, leading to a total of $\eta^2d$ robots in the master component.
\item At level $h$, $\eta$ components of size $d$ each merge into one of size $d$ in $O(\eta d\Delta)$ time, leading to a total of $\eta^hd$ robots in the master component.
\end{enumerate}
$\eta^hd$ is at most the maximum number of robots $k$. Solving $k=\eta^hd$, $h=\log_{\eta} \frac{k}{d}$.
Therefore the maximum total elapsed time until the $h$-th level meeting and collapse takes place is 
\[\mbox{Max. elapsed time is } O(h(\eta d\Delta))=O(\eta d\Delta\log_{\eta} \frac{k}{d})\]
This maximum elapsed time is $O(k\Delta)$, considering both extreme cases (a) $\eta d=O(1)$ and (b) $\eta d=O(k)$.

\item Case 2: At each level when components collapse and collect in a master component, the collected robots (almost) fully disperse after which the master component meets another component at the next level, and the collapse and collection happen at the next level. Again, the robots collected by the (new) master component (almost) fully disperse after which the master component meets another component at the yet next higher level, and so on till level $h$. This case assumes $\forall j, s(i,j)$ satisfies $next(j) \not< k'$.

\begin{enumerate}
\item At level 1, $\eta$ components of size $d$ each merge into one of size $\eta d$ in $O(\eta d\Delta)$ time, leading to a total of $\eta d$ robots in the master component.
\item At level 2, $\eta$ components of size $\eta d$ each merge into one of size $\eta^2d$ in $O(\eta^2d\Delta)$ time, leading to a total of $\eta^2d$ robots in the master component.
\item At level $h$, $\eta$ components of size $\eta^{h-1}d$ each merge into one of size $\eta^hd$ in $O(\eta^hd\Delta)$ time, leading to a total of $\eta^hd$ robots in the master component.
\end{enumerate}
$\eta^hd$ is at most the maximum number of robots $k$. Solving $k=\eta^hd$, $h=\log_{\eta} \frac{k}{d}$.
Therefore the maximum total elapsed time until the $h$-th level meeting and collapse/dispersion takes place is 
\begin{align*}
O(\Delta(\eta d + \eta^2d + \eta^3d + \ldots + \eta^hd)) &
= & O(\Delta \eta d \frac{\eta^h-1}{\eta-1}) \\
 & = & O(\frac{\Delta\eta d}{\eta-1} \eta^{\log_{\eta} \frac{k}{d}} -1)\\
 & = & O(\frac{\Delta\eta d}{\eta-1} (\frac{k}{d} -1))\\
 & = & O(k\Delta)
\end{align*}
\end{enumerate}

There is also a special case in which a single component $M$, each time ($\forall h'$), grows and meets other fully dispersed component(s) that collapse (transitively) in to it and no component meets $M$. Here, $\forall h'$, $X(M,h')=\emptyset$ as all subsumed components belong to $Y(M,h')$ sets. Observe that $\sum_{h'} c(M,h')= \sum_{h'} s(M,h')=O(k\Delta)$.

The lemma follows.
\end{proof}

\begin{lemma}
\label{S2series}
The sum in the series $S2$ is $O(k\Delta)$.
\end{lemma}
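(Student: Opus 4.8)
The plan is to bound the total delay overhead $S2 = \sum_{y\in Y(\cdot,\cdot)} t_y$ summed along the recursive chain defined by $x'$, where $t_{y(M,h)} = \sum_{b} (4d_b\Delta + d_b)$ is the delay contributed by the $Sl$-type children $b$ of $y$ (those smaller than $y$ and locked by some node). The key observation I would exploit is that each such delaying child $b$ is itself a node of the meeting tree that gets subsumed, so its size $d_b$ can be \emph{charged against $b$'s own robots}. The goal is to show that each node of the meeting tree is charged $O(\Delta)$ times in total across the entire run, so that $\sum t_y = O(\Delta \cdot (\text{total robots})) = O(k\Delta)$.

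First I would make precise which nodes contribute to which $t_y$: a node $b$ contributes $4d_b\Delta + d_b$ to $t_{y(M,h)}$ only if $b$ is an $Sl$-type child of $y$ in the component $C(M,h)$, i.e., $b$ is locked (so $b$ initiates $Collapse\_Into\_Child$ into some node $j$), $b$ is smaller than its parent $y \in Y\_trunk$, and $b$'s exploration of $y$ gets blocked behind another explorer's path. The crucial structural fact, which I would extract from the memory argument already in the text ("a retrace pointer at a node in $y$ can be left by only $O(1)$ children"), is that the $O(\log(k+\Delta))$-bit memory bound forces at most $O(1)$ children of $y$ to be concurrently exploring (leaving retrace pointers in) $y$ at once. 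Hence at each level of the meeting tree, a given node $b$ can be the "blocking/blocked" delaying party for only $O(1)$ distinct parents, and since the distinct components $C(M,h)$ partition the meeting-tree edges, the number of $(M,h)$ pairs to which $b$ contributes at a fixed level is $O(1)$.

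Next I would argue that $b$ contributes to the chain $S2$ at most $O(1)$ times overall, not just per level. The series $S2$ follows a single root-to-leaf style path in the meeting tree (the chain $M, x'(M,h), x'(x'(M,h),\ldots),\ldots$), and each node $b$ appears in at most one component $C(M,h)$ in this chain — once it is subsumed it is gone. Therefore $S2 \le \sum_{\text{meeting-tree nodes } b \text{ on the chain}} (4d_b^s\Delta + d_b^s)$ where the sum ranges over nodes that are $Sl$-delaying. Because each component $C(M,h)$ on the chain has its own set of settled robots disjoint from those added at later levels (a settled robot is counted in $d_b^s$ for exactly the component where it is first subsumed), we get $\sum_b d_b^s = O(k)$, hence $S2 = O(k\Delta)$. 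The same telescoping/disjointness reasoning used in Lemma~\ref{S1series} (where $d^s_{M,h} - d^s_{M,prev(h)}$ telescopes and the "otherwise" case sums each robot once) applies here.

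The main obstacle I anticipate is making rigorous the claim that the $Sl$-type delays do not compound multiplicatively — i.e., that when explorer path $E$ waits for $F$, and $F$ in turn may have waited for someone else, these waits do not stack into a long chain at a single node $y$. The resolution must again come from the $O(1)$-retrace-pointers-per-node invariant: since only $O(1)$ explorers can be simultaneously present inside $y$'s component, any waiting chain at $y$ has length $O(1)$, so the delay any single child $b$ suffers (and hence contributes) is $O(\Delta d_b)$ for $b$'s own traversal plus $O(1)$ times the analogous quantity for the $O(1)$ children ahead of it — and those are accounted for under \emph{their own} $d$ values. A careful statement here would phrase $t_y$ as a sum over the $O(1)$ maximal blocked explorers and invoke that each blocked explorer's cost is $4d_b\Delta + d_b$ as already computed in the Algorithm~\ref{algo:explore} cost analysis, then sum over the (disjoint across the chain) contributing nodes.
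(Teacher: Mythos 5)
There is a genuine gap, and it sits exactly at the step you flagged as the source of the bound: the claim that ``a settled robot is counted in $d_b^s$ for exactly the component where it is first subsumed,'' hence $\sum_b d_b^s = O(k)$ over the $Sl$-type contributors. This disjointness is false, and the paper's proof is organized around precisely this failure. When an $Sl$-type node $a$ contributes its delay $4d_a\Delta+d_a$ and is then subsumed by the node that locked it, its robots are carried along, get re-settled during the subsuming component's later growth, and are therefore counted \emph{again} inside $d_b^s$ when that larger component $b$ itself later becomes an $Sl$-type node at a higher level of the meeting tree. The paper states this explicitly (``the robots subsumed from $a$ may be double-counted in the size of $b$ when $b$ later becomes a type $Sl$ node. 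This can happen at most $k'-1$ times'') and resolves it not by a one-robot-one-charge argument but by a growth analysis over the repeated re-subsumptions: with $\eta$ components of size $d$ merging each time, either the collapse recurs before the collected robots settle (so each contribution stays $O(\eta d\Delta)$ while the robot count grows as $\eta^j d$, giving total $O(\eta d\Delta\log_\eta \tfrac{k}{d}) = O(k\Delta)$ at both extremes $\eta d=O(1)$ and $\eta d=O(k)$), or the robots essentially fully disperse between meetings (so the contributing sizes grow geometrically as $\eta d,\eta^2 d,\dots$ and the sum is a geometric series bounded by $O(k\Delta)$). Your per-robot charging scheme, taken literally, could in the worst case be forced to recharge the same robots at every one of the up to $k'-1$ re-subsumption events, and nothing in your argument rules that out.

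A secondary misreading compounds this: you invoke ``the same telescoping/disjointness reasoning used in Lemma~\ref{S1series},'' but the proof of Lemma~\ref{S1series} is not a telescoping or disjointness argument either --- it is the same two-extreme-case geometric analysis (components of size $d$ merging $\eta$ at a time, with sizes either frozen or growing geometrically), exactly because the collapse costs $c(M,h)$ there suffer the same re-counting phenomenon. Your observations about the $O(1)$ retrace-pointer constraint and about wait chains not compounding at a single $Y\_trunk$ node are reasonable but peripheral; the paper simply bounds each $Sl$ child's delay by $4d_a\Delta+d_a \le 4d_y\Delta+d_y$ and does not need the waiting-chain refinement. What your proposal is missing is the mechanism that controls the repeated re-counting across levels, which is the actual content of the paper's proof.
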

\begin{proof}

The series $S2$ is the sum of all the waits introduced by children $a$ of a $Y\_trunk$ node $y$, that are of type $Sl$. Such a $Sl$ child contributes delay up to $4d_a\Delta +d_a$ ($\leq 4d_y\Delta + d_y$) and then collapses and gets subsumed by the node $b$ that has locked it. Thus $Sl$ type children can occur at most $k'-1$ times in the lifetime of the execution. Note also that $d_b \geq d_a$ as $b$ to $a$ is a decreasing path.

If all the $Sl$ children were never involved in any meeting until now, then $\sum d_a \leq k$ and the lemma follows. However we need to also analyze the case where a $Sl$ node gets subsumed by another node $b$, and then the node $b$ becomes a $Sl$ node later. In this case, the robots subsumed from $a$ may be double-counted in the size of $b$ when $b$ later becomes a type $Sl$ node. This can happen at most $k'-1$ times. 

Let $\eta$ DFS components, including the $Sl$ component, collapse and merge into one of them, and let the size (i.e., number of settled robots) of each component be $d$. We consider two extreme cases and show for each that the lemma holds.
\begin{enumerate}
\item Case 1: When components collapse and are collected, immediately afterwards (before the collected unsettled robots can settle) the master component becomes a $Sl$-type node, and the collapse and collection happen again. Again, immediately afterwards, the new master component becomes a type $Sl$ node, and so on.
\begin{enumerate}
\item The first time, $\eta$ components of size $d$ each merge into one of size $d$ in $O(\eta d\Delta)$ time, leading to a total of $\eta d$ robots in the master component.
\item The second time, $\eta$ components of size $d$ each merge into one of size $d$ in $O(\eta d\Delta)$ time, leading to a total of $\eta^2d$ robots in the new master component.
\item The $j$-th time, $\eta$ components of size $d$ each merge into one of size $d$ in $O(\eta d\Delta)$ time, leading to a total of $\eta^jd$ robots in the master component.
\end{enumerate}
$\eta^jd$ is at most the maximum number of robots $k$. Solving $k=\eta^jd$, $j=\log_{\eta} \frac{k}{d}$.
Therefore the total delay introduced in series $S2$ which is linearly proportional to $\Delta$ times 
the sum of sizes of the type $Sl$ components, is $O(\eta\Delta dj)$.
\[\mbox{Sum of delays is } O(\eta\Delta dj)=O(\eta\Delta d\log_{\eta} \frac{k}{d})\]
This maximum elapsed time is $O(k\Delta)$, considering both extreme cases (a) $\eta d=O(1)$ and (b) $\eta d=O(k)$. 

\item Case 2: When components collapse and are collected, the collected robots (almost) fully disperse after which the master component becomes a type $Sl$ node, and the collapse and collection happen again. Again, the collected robots in the new master component (almost) fully disperse after which the (new) master component becomes a type $Sl$ node and collapses and gets collected, and so on.

\begin{enumerate}
\item The first time, $\eta$ components of size $d$ each merge and settle into one of size $\eta d$ in $O(\eta d\Delta)$ time, leading to a total of $\eta d$ robots in the master component.
\item The second time, $\eta$ components of size $\eta d$ each merge and settle into one of size $\eta^2d$ in $O(\eta^2d\Delta)$ time, leading to a total of $\eta^2d$ robots in the master component.
\item The $j$-th time, $\eta$ components of size $\eta^{j-1}d$ each merge and settle into one of size $\eta^jd$ in $O(\eta^jd\Delta)$ time, leading to a total of $\eta^jd$ robots in the master component.
\end{enumerate}
$\eta^jd$ is at most the maximum number of robots $k$. Solving $k=\eta^jd$, $j=\log_{\eta} \frac{k}{d}$.
Therefore the total delay introduced in series $S2$ which is linearly proportional to $\Delta$ times 
the sum of sizes of the type $Sl$ components, is 
\begin{align*}
O(\Delta(\eta d + \eta^2d + \eta^3d + \ldots + \eta^jd)) &
= & O(\Delta\eta d \frac{\eta^h-1}{\eta-1}) \\
 & = & O(\frac{\Delta\eta d}{\eta-1} \eta^{\log_{\eta} \frac{k}{d}} -1)\\
 & = & O(\frac{\Delta\eta d}{\eta-1} (\frac{k}{d} -1))\\
 & = & O(k\Delta)
\end{align*}
\end{enumerate}
The lemma follows.
\end{proof}

\begin{theorem}
Algorithm Exploration (Algorithm~\ref{algo:explore}) in conjunction with Algorithm $DFS(k)$ correctly solves {\dis} for  $k\leq n$ robots initially positioned arbitrarily on the nodes of an arbitrary anonymous graph $G$ of $n$ memory-less nodes, $m$ edges, and degree $\Delta$  in $O(\min\{m, k\Delta\})$ rounds using $O(\log(k+\Delta))$ bits at each robot.
\label{dispcorrect}
\end{theorem}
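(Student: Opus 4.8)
The plan is to establish Theorem~\ref{dispcorrect} in three parts: correctness, memory, and time. For \textbf{correctness}, I would argue that at every point in the execution, the settled robots together with the head robots of the currently active DFS traversals account for all $k$ robots, and that each active DFS traversal is a legitimate DFS (in the sense of Theorem~\ref{dfscorrect}) on the subgraph of nodes carrying its $treelabel$. The key invariant is that a node is occupied by at most one settled robot and that $root(i)$ and $head(i)$ pointers are always well-defined and reachable via $parent$/$child$ chains, even across the $Collapse\_Into\_Parent$, $Collapse\_Into\_Child$, and $Parent\_Is\_Collapsing$ procedures; one checks that each procedure, upon completion, leaves the union of surviving DFS traversals in a consistent state with strictly fewer traversals. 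Then I would invoke the standard fact that any DFS on a finite graph terminates having visited a connected set of nodes, so when no two traversals can meet any more, the single surviving traversal (per connected component of $G$, but $G$ is connected) settles robots on $k$ distinct nodes, whence dispersion holds. The crucial sub-claim, already asserted in the text preceding Algorithm~\ref{algo:master}, is the deadlock-freedom of the waiting scheme: I would prove that the partition of a connected component of the meeting graph $G'$ into sub-components each with a unique master $M$ of maximal $d$ is well-defined, that $Determine\_Master$ terminates, and that all waits ($wait$ until $F$ returns, $wait$ until $parent(i)$ collapses or unblocks, $wait$ until $collapsing\_children = 0$) are acyclic because every wait edge points toward a strictly larger (under the total order $>$) DFS or toward a DFS that is provably closer to being subsumed.

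For \textbf{memory}, the analysis is essentially the same as in Theorem~\ref{dfscorrect}: every variable ($parent$, $child$, $treelabel$, $state$, $rank$, plus the duplicate DFS variables used during component traversals, the $collapsing\_children$ counter, and the $retrace$/$collapse$ pointers) is either an $O(\log\Delta)$-bit port number, an $O(\log k)$-bit ID or rank, or $O(1)$ bits of state. The one point needing care — and which the algorithm is explicitly designed around — is that a node can hold $retrace$ (or $collapse$) pointers for only $O(1)$ concurrently exploring children, not $O(k')$ of them; this is exactly why Algorithm~\ref{algo:explore} makes colliding explorer paths $E$ and $F$ wait rather than overwrite. So I would state the memory bound as $O(\log(k+\Delta))$ bits per robot, noting that at most a constant number of auxiliary pointer/counter fields are live at any node at any time.

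For \textbf{time}, I would assemble the pieces already laid out in the excerpt. The total running time is bounded by the cost of the final surviving DFS traversal, which telescopes along a path in the meeting tree: $T(M,h) \le f(M,h) + g(M,h) + T(x', prev(h))$ where $x'$ is the child of $(M,h)$ with the largest $T$-value, and $f = c + s$ captures collapse/collection/settling costs while $g = \sum_{y\in Y(M,h)} t_y$ captures the $Sl$-type waiting delays. Unrolling this recursion gives $T = S1 + S2$, and Lemmas~\ref{S1series} and~\ref{S2series} bound each of $S1$ and $S2$ by $O(k\Delta)$. Combining with the observation that whenever the single surviving DFS traversal runs uninterrupted it is subject to the $DFS(k)$ bound $\min\{4m-2n+2,\,4k\Delta\}$ from Theorem~\ref{dfscorrect}, the overall time is $O(\min\{m,k\Delta\})$ rounds.

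The main obstacle I anticipate is \textbf{not} the arithmetic in the time bound (which is already done in the excerpt via the $S1$/$S2$ lemmas) but rather pinning down the correctness and deadlock-freedom rigorously: one must show that the informally described meeting graph $G'$, its dynamic partition into master sub-components $C(M)$, the classification of children as $X$, $Y\_trunk$, $Y\_branch$, or as $Su/Sl/Lu/Ll$-type, and the interleavings of $Exploration$ with the three collapse procedures all compose into a system where (i) every exploratory traversal correctly computes $d_j$ or correctly detects $rank > d_i$, (ii) locks are granted to at most one child, (iii) the "$L*$-type children learn that $y$ is collapsing before they try to lock $y$" claim actually holds given the relative timing of locking versus exploration, and (iv) no cyclic wait arises even when a component is simultaneously a parent (being explored/locked by children) and a child (exploring its own parent). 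I would handle this by induction on the level $h$ of the meeting tree, proving at each level a package of invariants about pointer consistency, uniqueness of locks, and the acyclicity of the wait-for relation, and only then reading off the time recurrence whose solution is supplied by the two lemmas.
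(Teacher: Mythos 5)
Your overall route is the same as the paper's: the time bound comes from the telescoping recurrence $T(M,h)\le f(M,h)+g(M,h)+T(x',prev(h))$, unrolled into the series $S1$ and $S2$ and bounded by Lemmas~\ref{S1series} and~\ref{S2series}; the memory bound comes from itemizing the extra per-robot variables on top of Theorem~\ref{dfscorrect}; and correctness comes from the observation that one master DFS is never subsumed, grows by $DFS(k)$, and that subsumption merely relocates robots to its head where the growing phase later disperses them. Your proposed additional rigor on deadlock-freedom (uniqueness of locks, acyclicity of the wait-for relation, induction on the meeting-tree level) goes beyond what the paper itself writes down -- the paper asserts these properties rather than proving them -- so as a plan that part is compatible with, indeed more ambitious than, the published argument.

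The one genuine gap is in how you reach the $O(m)$ branch of the $\min$. Lemmas~\ref{S1series} and~\ref{S2series} only give $O(k\Delta)$, and your stated bridge -- that the surviving DFS, when it runs uninterrupted, obeys the $\min\{4m-2n+2,\,4k\Delta\}$ bound of Theorem~\ref{dfscorrect} -- controls only the growth/settling contribution $s(\cdot,\cdot)$; it says nothing about the exploration, collapse, collection, and $Sl$-waiting terms $c(\cdot,\cdot)$ and $g(\cdot,\cdot)$, which is precisely where the $4\Delta d$-type factors live. So when $m<k\Delta$ your argument as written yields only $O(k\Delta)$, not $O(m)$. The paper closes this with one further observation that you need to make explicit: in every component traversal (exploration to the head, DFS-tree collapse, collection to the collapse path) each edge incident to a visited node is crossed $O(1)$ times, so the per-component cost $4\Delta d_i$ used throughout the lemmas is an overestimate of $O\bigl(\sum_{v} \delta_v\bigr)$ over the visited nodes; summing over the execution replaces every occurrence of $k\Delta$ by $O\bigl(\sum_{i=1}^{n}\delta_i\bigr)=O(m)$, which is what finally gives $O(\min\{m,k\Delta\})$.
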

\begin{proof}
$T(M,h)$ is the sum of the series $S1$ and $S2$ which are both $O(k\Delta)$ by Lemmas~\ref{S1series} and~\ref{S2series}. So the time till termination of the Algorithms~\ref{algo:dfs} ($DFS$), ~\ref{algo:explore} ($Exploration$), and Algorithm~\ref{algo:procedures} ({\em various procedues invoked}) is $O(k\Delta)$.
As $k\leq n$, this is $O(n\Delta)$. Now observe that in our derivations (Lemmas~\ref{S1series} and ~\ref{S2series}), the $\Delta$ factor is an overestimate. The actual upper bound is $O(\sum_{i=1}^n \delta_i)$ which is $O(m)$, the number of edges in the graph. This upper bound is better when $m<k\Delta$ and hence the time complexity is $O(\min\{m,k\Delta\})$.

The highest level node $(i,h)$ in each tree in the final forest of the meeting graph represents a master node that has never been subsumed and always alternated between growing and subsuming other components, and growing again. The growth happens as per Algorithm~\ref{algo:dfs} ($DFS$) which correctly solves \dis\ by Theorem~\ref{dfscorrect}. Whereas the subsuming of other components merely collects the robots of the other components to the head node $head(i)$ (Algorithm $Exploration$) which subsequently get dispersed by the growing phases (Algorithm $DFS$). Hence, \dis\ is achieved. 

The $retrace$ and $collapse$ variable at each robot used in Algorithm~\ref{algo:explore} and~\ref{algo:procedures} are $O(\log\,\Delta)$. $collapsing\_children$ takes $O(\log\,k)$  bits and a single bit each is required to track whether the component is locked and whether it is collapsing. The space requirement of Algorithm~\ref{algo:dfs} was shown in Theorem~\ref{dfscorrect} to be $(\log(k+\Delta))$ bits.

The theorem follows.
\end{proof}

\noindent{\bf Proof of Theorem~\ref{theorem:0}.} Follows from Theorem~\ref{dispcorrect}. \qed


\vspace{1mm}
\noindent{\bf Proof of Theorem~\ref{theorem:1}.}
In the asynchronous setting, there is no common notion of time or coordinated CCM cycles or lengths of CCM cycles of different robots. In every CCM cycle, each robot at a node $u$ determines $x$, the number of co-located robots, if any, that should be moving with it to node $v$. It then moves as per its own schedule. On arriving at $v$, it does not start its next CCM cycle until $x$ robots have arrived from $u$. This essentially constitutes one epoch and ensures that the robots that move together in a round in a synchronous setting move together in one epoch in the asynchronous setting.
With this simple modification, the algorithm given for the synchronous setting works for the asynchronous setting. The space and time complexities, as given in Theorem~\ref{theorem:0}, carry over to the asynchronous setting. \qed

\section{Concluding Remarks}
\label{section:conclusion}
In this paper, we have presented a deterministic algorithm that solves {\dis}, starting from any initial configuration of $k\leq n$ robots positioned on the nodes of an arbitrary anonymous graph $G$ having $n$ memory-less nodes, $m$ edges, and degree $\Delta$, in time $O(\min\{m,k\Delta\})$ with $O(\log(k+\Delta))$ bits at each robot. This is the first algorithm that is simultaneously optimal w.r.t. both time and memory in arbitrary anonymous graphs of constant degree, i.e., $\Delta=O(1)$. This algorithm improves the time bound established in the best previously known results \cite{KshemkalyaniMS19,ShintakuSKM20} by an $O(\log \ell)$ factor and matches asymptotically the time and memory bound of the single-source DFS traversal. This algorithm uses a non-trivial approach of subsuming parallel DFS traversals into single one based on their DFS tree sizes, limiting the overhead in the subsumption process to the time proportional to the time needed in the single-source DFS traversal. This approach might be of independent interest to solve other fundamental problems in distributed robotics. 

For future work, it will be interesting to improve the existing time lower bound of $\Omega(k)$ to $\Omega(\min\{m,k\Delta\})$ or improve the time bound to $O(k)$ removing the $O(\Delta)$ factor. 
The second interesting direction will be to consider faulty (crash and/or Byzantine) robots. 

\bibliographystyle{plain}

\begin{thebibliography}{10}

\bibitem{Augustine:2018}
John Augustine and William K.~Moses Jr.
\newblock Dispersion of mobile robots: {A} study of memory-time trade-offs.
\newblock In {\em ICDCN}, pages 1:1--1:10, 2018.

\bibitem{Bampas:2009}
Evangelos Bampas, Leszek G{a}sieniec, Nicolas Hanusse, David Ilcinkas, Ralf
  Klasing, and Adrian Kosowski.
\newblock Euler tour lock-in problem in the rotor-router model: I choose
  pointers and you choose port numbers.
\newblock In {\em DISC}, pages 423--435, 2009.

\bibitem{Barriere2009}
L.~Barriere, P.~Flocchini, E.~Mesa-Barrameda, and N.~Santoro.
\newblock Uniform scattering of autonomous mobile robots in a grid.
\newblock In {\em IPDPS}, pages 1--8, 2009.

\bibitem{Cohen:2008}
Reuven Cohen, Pierre Fraigniaud, David Ilcinkas, Amos Korman, and David Peleg.
\newblock Label-guided graph exploration by a finite automaton.
\newblock {\em ACM Trans. Algorithms}, 4(4):42:1--42:18, August 2008.

\bibitem{Cord-LandwehrDFHKKKKMHRSWWW11}
Andreas Cord{-}Landwehr, Bastian Degener, Matthias Fischer, Martina
  H{\"{u}}llmann, Barbara Kempkes, Alexander Klaas, Peter Kling, Sven Kurras,
  Marcus M{\"{a}}rtens, Friedhelm {Meyer auf der Heide}, Christoph Raupach,
  Kamil Swierkot, Daniel Warner, Christoph Weddemann, and Daniel Wonisch.
\newblock A new approach for analyzing convergence algorithms for mobile
  robots.
\newblock In {\em ICALP}, pages 650--661, 2011.

\bibitem{Cormen:2009}
Thomas~H. Cormen, Charles~E. Leiserson, Ronald~L. Rivest, and Clifford Stein.
\newblock {\em Introduction to Algorithms, Third Edition}.
\newblock The MIT Press, 3rd edition, 2009.

\bibitem{Cybenko:1989}
G.~Cybenko.
\newblock Dynamic load balancing for distributed memory multiprocessors.
\newblock {\em J. Parallel Distrib. Comput.}, 7(2):279--301, October 1989.

\bibitem{DasBS21}
Archak Das, Kaustav Bose, and Buddhadeb Sau.
\newblock Memory optimal dispersion by anonymous mobile robots.
\newblock In {\em CALDAM}, pages 426--439, 2021.

\bibitem{Dereniowski:2015}
Dariusz Dereniowski, Yann Disser, Adrian Kosowski, Dominik Pajak, and
  Przemyslaw Uzna\'{n}ski.
\newblock Fast collaborative graph exploration.
\newblock {\em Inf. Comput.}, 243(C):37--49, August 2015.

\bibitem{ElorB11}
Yotam Elor and Alfred~M. Bruckstein.
\newblock Uniform multi-agent deployment on a ring.
\newblock {\em Theor. Comput. Sci.}, 412(8-10):783--795, 2011.

\bibitem{Flocchini2012}
Paola Flocchini, Giuseppe Prencipe, and Nicola Santoro.
\newblock {\em Distributed Computing by Oblivious Mobile Robots}.
\newblock Synthesis Lectures on Distributed Computing Theory. Morgan {\&}
  Claypool Publishers, 2012.

\bibitem{flocchini2019}
Paola Flocchini, Giuseppe Prencipe, and Nicola Santoro.
\newblock {\em Distributed Computing by Mobile Entities}, volume~1 of {\em
  Theoretical Computer Science and General Issues}.
\newblock Springer International Publishing, 2019.

\bibitem{FraigniaudGKP06}
Pierre Fraigniaud, Leszek Gasieniec, Dariusz~R. Kowalski, and Andrzej Pelc.
\newblock Collective tree exploration.
\newblock {\em Networks}, 48(3):166--177, 2006.

\bibitem{Fraigniaud:2005}
Pierre Fraigniaud, David Ilcinkas, Guy Peer, Andrzej Pelc, and David Peleg.
\newblock Graph exploration by a finite automaton.
\newblock {\em Theor. Comput. Sci.}, 345(2-3):331--344, November 2005.

\bibitem{Kshemkalyani}
Ajay~D. Kshemkalyani and Faizan Ali.
\newblock Efficient dispersion of mobile robots on graphs.
\newblock In {\em ICDCN}, pages 218--227, 2019.

\bibitem{KshemkalyaniMS19}
Ajay~D. Kshemkalyani, Anisur~Rahaman Molla, and Gokarna Sharma.
\newblock Fast dispersion of mobile robots on arbitrary graphs.
\newblock In {\em ALGOSENSORS}, pages 23--40, 2019.

\bibitem{KshemkalyaniMS20}
Ajay~D. Kshemkalyani, Anisur~Rahaman Molla, and Gokarna Sharma.
\newblock Dispersion of mobile robots in the global communication model.
\newblock In {\em ICDCN}, pages 12:1--12:10, 2020.

\bibitem{KshemkalyaniWALCOM20}
Ajay~D. Kshemkalyani, Anisur~Rahaman Molla, and Gokarna Sharma.
\newblock Dispersion of mobile robots on grids.
\newblock In {\em {WALCOM}}, pages 183--197, 2020.

\bibitem{KshemkalyaniICDCS20}
Ajay~D. Kshemkalyani, Anisur~Rahaman Molla, and Gokarna Sharma.
\newblock Efficient dispersion of mobile robots on dynamic graphs.
\newblock In {\em ICDCS}, pages 732--742, 2020.

\bibitem{MencPU17}
Artur Menc, Dominik Pajak, and Przemyslaw Uznanski.
\newblock Time and space optimality of rotor-router graph exploration.
\newblock {\em Inf. Process. Lett.}, 127:17--20, 2017.

\bibitem{tamc19}
Anisur~Rahaman Molla and William K.~Moses Jr.
\newblock Dispersion of mobile robots: The power of randomness.
\newblock In {\em TAMC}, pages 481--500, 2019.

\bibitem{MollaMM20ALGOSENSORS}
Anisur~Rahaman Molla, Kaushik Mondal, and William K.~Moses Jr.
\newblock Efficient dispersion on an anonymous ring in the presence of weak
  byzantine robots.
\newblock In {\em ALGOSENSORS}, pages 154--169, 2020.

\bibitem{Molla-2021IPDPS}
Anisur~Rahaman Molla, Kaushik Mondal, and William K.~Moses Jr.
\newblock Byzantine dispersion on graphs.
\newblock In {\em IPDPS}, pages 1--10, 2021.

\bibitem{PattanayakS021}
Debasish Pattanayak, Gokarna Sharma, and Partha~Sarathi Mandal.
\newblock Dispersion of mobile robots tolerating faults.
\newblock In {\em ICDCN}, pages 133--138, 2021.

\bibitem{Poudel18}
Pavan Poudel and Gokarna Sharma.
\newblock Time-optimal uniform scattering in a grid.
\newblock In {\em ICDCN}, pages 228--237, 2019.

\bibitem{PoudelS20}
Pavan Poudel and Gokarna Sharma.
\newblock Fast uniform scattering on a grid for asynchronous oblivious robots.
\newblock In {\em SSS}, pages 211--228, 2020.

\bibitem{Shibata:2016}
Masahiro Shibata, Toshiya Mega, Fukuhito Ooshita, Hirotsugu Kakugawa, and
  Toshimitsu Masuzawa.
\newblock Uniform deployment of mobile agents in asynchronous rings.
\newblock In {\em PODC}, pages 415--424, 2016.

\bibitem{ShintakuSKM20}
Takahiro Shintaku, Yuichi Sudo, Hirotsugu Kakugawa, and Toshimitsu Masuzawa.
\newblock Efficient dispersion of mobile agents without global knowledge.
\newblock In {\em SSS}, pages 280--294, 2020.

\end{thebibliography}


\end{document}